\let\doendproof\endproof
\renewcommand\endproof{~\hfill\qed\doendproof}
\begin{document}
\title{A Characterization of Complexity in Public Goods Games}

\author{Matan Gilboa\orcidID{0009-0009-9643-4105}}

\authorrunning{M. Gilboa}

\institute{University of Oxford
\email{matan.gilboa@cs.ox.ac.uk}}

\maketitle 
\begin{abstract}
We complete the characterization of the computational complexity of equilibrium in public goods games on graphs. In this model, each vertex represents an agent deciding whether to produce a public good, with utility defined by a "best-response pattern" determining the best response to any number of productive neighbors. We prove that the equilibrium problem is NP-complete for every finite non-monotone best-response pattern. This answers the open problem of [Gilboa and Nisan, 2022], and completes the answer to a question raised by [Papadimitriou and Peng, 2021], for all finite best-response patterns.

\keywords{Nash Equilibrium \and Public Goods \and Computational Complexity.}
\end{abstract}

\section{Introduction}
Public goods games describe scenarios where multiple agents face a decision of whether or not to produce some "good", such that producing this good benefits not only themselves, but also other (though not necessarily all) agents. Typically, we consider the good to be costly to produce, and therefore an agent might choose not to produce it, depending on the actions of the agents that affect her. This type of social scenarios can be found in various real-life examples, such as vaccination efforts (an individual pays some personal cost for being vaccinated but she and other people in her proximity gain from it) and research efforts (a research requires many resources, but the researcher benefits from the results along with other researchers in similar areas).
As is common in the literature, to model this we use an undirected graph, where each node represents an agent and an edge between two nodes captures the fact that these nodes directly affect one another by their strategy. As in \cite{Modifications,Parameterized,Directed_Paper,point_out_flaw,general_homogeneous_corrected}, in our model the utility of an agent is completely determined by the number of productive neighbors she has, as well as by her own action. We focus on a specific version of the game which has the following characteristics. Firstly, our strategy space is binary, i.e. an agent can only choose whether or not to produce the good, rather than choose a quantity (we call an agent who produces the good a \textit{productive} agent); secondly, our game is \textit{fully-homogeneous}, meaning that all agents share the same utility function and cost of producing the good; and thirdly, our game is \textit{strict}, which means that an agent has a single best response to any number of productive neighbors she might have (i.e. we do not allow indifference between the actions).

The game is formally defined by some fixed cost $c$ of producing the good, and by some "social" function $X(s_i,n_i)$, which takes into account the boolean strategy of agent $i$ and the number of productive neighbors she has (marked as $s_i$ and $n_i$ respectively), and outputs a number representing how much the agent gains. The utility $u_i$ of agent $i$ is then given by the social function $X(s_i,n_i)$, reduced by the cost $c$ if the agent produces the good, i.e.
$u_i(s_i,n_i)=X(s_i,n_i)-c\cdot s_i$.
However, since any number of productive neighbors yields a unique best response (i.e. the game is \textit{strict}), we can capture the essence of the utility function and the cost using what we call (as in \cite{Gilboa_Nisan_public_goods}), a Best-Response Pattern $T:{\rm I\!N} \rightarrow \{0,1\}$. We think of the Best-Response Pattern as a boolean vector in which the $k^{th}$ entry represents the best response to exactly $k$ productive neighbors. We are interested in the problem of determining the existence of a non-trivial pure Nash equilibrium in these games, which is defined as follows.

\vspace{0.1in}
\noindent
{\bf Equilibrium decision problem in a public goods game:} For a fixed Best-Response Pattern $T:{\rm I\!N} \rightarrow \{0,1\}$, and with an undirected graph $G=(V,E)$ given as input, determine whether there exists a pure non-trivial Nash equilibrium of the public goods game defined by $T$ on $G$, i.e. an assignment $s:V \rightarrow \{0,1\}$ that is not all $0$, such that for every $1\leq i \leq |V|$ we have that 
\[s_i=T[\sum_{j\in N(i)}s_j],\]
where $N(i)$ is the set of neighbors of agent $i$.
\vspace{0.1in}

The first Best-Response Pattern for which this problem was studied was the so-called Best-Shot pattern (where an agent's best response is to produce the good only if she has no productive neighbors, namely $T=[1,0,0,0,...]$), which was shown in \cite{Best_Shot} to have a pure Nash equilibrium in any graph. In \cite{Best_Shot}, they also show algorithmic results for "convex" patterns, which are monotonically increasing (best response is 1 if you have at least $k$ productive neighbors). The question of characterizing the complexity of this problem for all possible patterns was first raised in \cite{Directed_Paper}, where they manage to fully answer an equivalent problem on directed graphs: They show tractability for the All-$1$ pattern, the infinite alternating $[1,0,1,0,...]$ pattern, and all patterns beginning with $0$, and NP-completeness for all other patterns. The open question on undirected graphs was then partially answered in \cite{Gilboa_Nisan_public_goods}, where an efficient algorithm is shown for the pattern $[1,1,0,0,0,...]$, and NP-completeness is established for several classes of non-monotone patterns: Those beginning with $0$ or $11$, or have a prefix of the form $1,0,0,...,0,1,1$. There have been several studies concerning other versions of this problem as well. In \cite{point_out_flaw}, the general version of this problem (where the pattern is part of the input rather than being fixed) was shown to be NP-complete when removing the strictness assumption, (i.e. allowing indifference between actions, such that both 0 and 1 are best responses in certain cases) \footnote{The paper \cite{general_homogeneous_flawed} had an earlier version \cite{general_homogeneous_corrected} which presented a proof for this case as well, but an error in the proof was pointed out in \cite{point_out_flaw}, who then also provided an alternative proof.}. In \cite{general_homogeneous_corrected}, NP-completeness is shown for the general version of the problem in the heterogeneous public goods game, in which the utility function varies between agents. In \cite{Modifications}, they show NP-completeness of the equilibrium problem when restricting the equilibrium to have at least $k$ productive agents, or at least some specific subset of agents. In \cite{Parameterized}, the parameterized complexity of the equilibrium problem is studied, for a number of parameters of the graph on which the game is defined.

In \cite{Gilboa_Nisan_public_goods}, two open problems are suggested regarding the two following patterns: $T_1=[1,1,1,0,0,0,...]$, $T_2=[1,0,1,0,0,0,...]$. $T_1$ has been recently solved in \cite{Monotone_Case}, where they show that all monotonically decreasing patterns can be viewed as potential games, and thus always have a pure Nash equilibrium\footnote{Alternatively, known results about $k$-Dominating and $k$-Independent sets (Theorem 19 in \cite{k_dominating_independent_set}) can be used to prove this.\label{Sigal_Oren_footnote}}. 

There are various instances where non-monotonic patterns are of interest. For example, consider work that necessitates collaboration from $n$ agents or a financial effort that is irrelevant if too few agents contribute but if too many do so it becomes redundant from an agent's perspective. Our main contribution is completing the characterization of the equilibrium decision problem for all finite patterns, by showing that for all non-monotone patterns the problem is NP-complete.

\vspace{0.1in}
\noindent
{\bf Theorem:} For any Best-Response Pattern that is non-monotone and finite (i.e., has a finite number of entries with value 1), the equilibrium decision problem in a public goods game is NP-complete (under Turing reductions).
\vspace{0.1in}

The first step along this way was to prove NP-completeness for the above pattern $T_2$ (which we call the 0-Or-2-Neighbors pattern), namely the second open problem by \cite{Gilboa_Nisan_public_goods}. An alternative proof to this specific problem was obtained independently and concurrently in \cite{Monotone_Case}.

We note that we only focus on finite patterns, which we believe to be more applicable to real-life problems that can be modeled by this game. Nonetheless, we find the characterization of all infinite patterns to be of interest, and this topic remains open, though some initial results can be found in Corollary \ref{cor_after_6_doesnt_matter}. Another interesting open problem is to obtain a similar characterization for the non-strict version of the game, where agents are allowed to be indifferent between the two possible actions.

The rest of this paper is organized as follows. In Section \ref{sec_model} we introduce the formal model and some relevant definitions. We then set out to show hardness of all remaining patterns, dividing them into classes. In Section \ref{sec_0_or_2} we present a solution for the open question from \cite{Gilboa_Nisan_public_goods}, showing hardness of the 0-Or-2-Neighbors Best Response Pattern, and expanding the result to a larger sub-class of patterns that begin with 1,0,1. In Section \ref{sec_semi_shrp_general} we show hardness of all patterns beginning with 1,0,0 (where we also have a slightly more subtle division into sub-classes), and in Section \ref{sec_all_spiked} we show hardness of all patterns beginning with 1,0,1 that were not covered in Section \ref{sec_0_or_2}, thus completing the characterization for all finite patterns. The outline of this paper is also depicted\footnote{Some patterns which start with 1,0 were solved in \cite{Gilboa_Nisan_public_goods}, though for simplicity we omit them from Figure \ref{fig_outline}.} in Figure \ref{fig_outline}.

\begin{figure}
	\includegraphics[width=.9\textwidth]{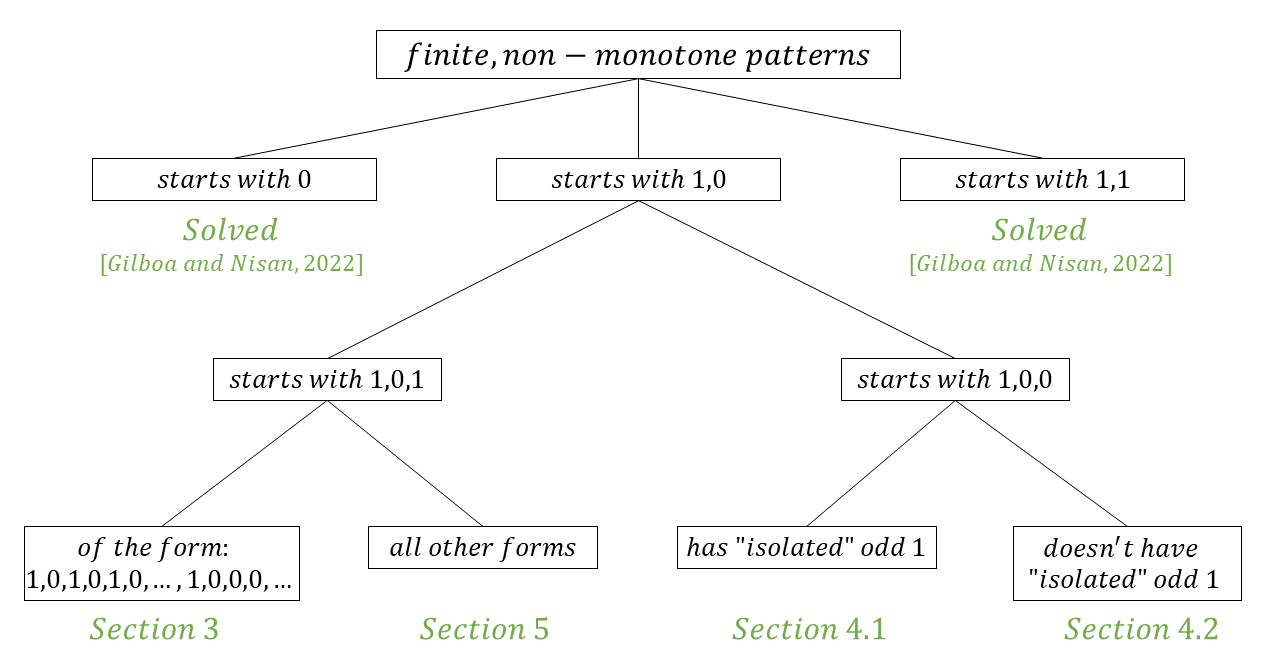}
	\caption{Outline of this paper.}
	\label{fig_outline}
\end{figure}

\section{Model and Definitions}
\label{sec_model}
A \textit{Public Goods Game} (PGG) is defined on an undirected graph $G=(V,E)$, $V=\{v_1,...,v_n\}$, where each node represents an agent. The strategy space, which is identical for all agents, is $S=\{0,1\}$, where 1 represents producing the good and 0 represents not producing it. The utility of node $v_i$ (which is assumed to be the same for all agents) is completely determined by the number of productive neighbors $v_i$ has, as well as by $v_i$'s own strategy. Moreover, our model is restricted to utility functions where an agent always has a single best response to the strategies of its neighbors, i.e. there is no indifference between actions in the game. Therefore, rather than defining a PGG with an explicit utility function and cost for producing the good, we can simply consider the best response of an agent for any number of productive agents in its neighborhood. Essentially, this can be modeled as a function $T:{\rm I\!N}\rightarrow \{0,1\}$, which, as in \cite{Gilboa_Nisan_public_goods}, we represent in the form of a \textit{Best Response Pattern}:
\begin{definition}
A Best-Response Pattern (BRP) of a PGG, denoted by $T$, is an infinite boolean vector in which the $k^{th}$ entry indicates the best response for each agent $v_i$ given that exactly $k$ neighbors of $v_i$ (excluding $v_i$) produce the good:
\begin{align*}
\forall k\geq 0 \;\; T[k]= \text{best response to $k$ productive neighbors.}
\end{align*}
\end{definition}

\begin{definition}
Given a Public Goods Game defined on a graph $G=(V,E)$ with respect to a BRP $T$, a strategy profile $s=(s_1,...,s_n)\in S^n$ (where $s_i\in \{0,1\}$ represents the strategy of node $v_i\in V$) is a pure Nash equilibrium (PNE) if all agents play the best response to the strategies of their neighbors:
\begin{align*}
\forall 1\leq i\leq n\;\; s_i=T[\sum_{j\in N(i)}s_j],
\end{align*}
where $N(i)=\{j:\{v_i,v_j\}\in E\}$.
In addition, if there exists $1\leq i\leq n$ s.t. $s_i=1$, then $s$ is called a non-trivial pure Nash equilibrium (NTPNE).
\end{definition}

We note that throughout the paper we also use the notation $v_i=0$ and $v_i=1$ to indicate the strategy of some node $v_i$, rather than use $s_i=0$ and $s_i=1$, respectively.

\begin{definition}
For a fixed BRP $T$, the non-trivial\footnote{In this paper, we only study BRPs where the best response for zero productive neighbors is 1, for which there never exists a trivial all-zero PNE (as these are the only BRPs left to solve). However, we sometimes reduce from patterns where this is not the case, and therefore include the non-triviality restriction in our problem definition, in order to correspond with the literature.} pure Nash equilibrium decision problem corresponding to $T$, denoted by NTPNE($T$), is defined as follows:
The input is an undirected graph $G$. The output is 'True' if there exists an NTPNE in the PGG defined on $G$ with respect to $T$, and 'False' otherwise.
\end{definition}

\begin{definition}
A BRP $T$ is called monotonically increasing (resp. decreasing) if $\forall k\in {\rm I\!N}$, $T[k]\leq T[k+1]$ (resp. $T[k]\geq T[k+1]$).
\end{definition}

\begin{definition}
A BRP $T$ is called finite if it has a finite number of entries with value 1:
\[\exists N\in {\rm I\!N} \;\;s.t.\;\; \forall n>N \;\; T[n]=0\]
\end{definition}

As seen in Figure \ref{fig_outline}, the only patterns for which the equilibrium decision problem remains open are patterns that begin with 1,0. We divide those into the two following classes of patterns. 

\begin{definition}
A BRP $T$ is called \textit{semi-sharp} if:
\begin{enumerate}
    \item $T[0]=1$
    \item $T[1]=T[2]=0$
\end{enumerate}
i.e. $T$ begins with $1,0,0$.
\end{definition}

\begin{definition}
A BRP $T$ is called \textit{spiked} if:
\begin{enumerate}
    \item $T[0]=T[2]=1$
    \item $T[1]=0$
\end{enumerate}
i.e. $T$ begins with $1,0,1$.
\end{definition}

We note that given any finite BRP $T$, the NTPNE($T$) problem is in NP, since an assignment to the nodes can be easily verified as an NTPNE by iterating over the nodes and checking whether they all play their best response. Therefore, we only prove NP-hardness of the problems throughout the paper.
\section{Hardness of the 0-Or-2-Neighbors Pattern}
\label{sec_0_or_2}
In this section we show that the equilibrium problem is NP-complete for the 0-Or-2-Neighbors pattern, and provide some intuition about the problem. This result answers an open question from \cite{Gilboa_Nisan_public_goods}. We then expand this to show hardness of a slightly more general class of patterns.
In the 0-Or-2-Neighbors BRP the best response is 1 only to zero or two productive neighbors, as we now define.
\begin{definition}
The 0-Or-2-Neighbors Best Response Pattern is defined as follows:
\begin{align*}
\forall k\in {\rm I\!N} \;\; T[k]=    
    \begin{cases}
      1 & \text{if $k=0\;\; or \;\; k=2$}\\
      0  & \text{otherwise}
    \end{cases}
\end{align*}
i.e.
\[T=[1,0,1,0,0,0,...].\]
\end{definition}

\begin{theorem}
\label{thrm_0_or_2}
Let $T$ be the 0-Or-2 Neighbors BRP. Then NTPNE($T$) is NP-complete.
\end{theorem}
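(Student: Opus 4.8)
Membership in NP is already noted, so the task is NP-hardness, and I would obtain it by a polynomial-time reduction from a suitable Boolean constraint satisfaction problem. The starting point is to record what $T=[1,0,1,0,0,\dots]$ forces locally: in any PNE a \emph{productive} vertex has either $0$ or $2$ productive neighbours, while a \emph{non-productive} vertex has either exactly $1$ productive neighbour or at least $3$. Two immediate consequences give primitive gadgets. First, a degree-one vertex is in every PNE the logical complement of its unique neighbour (a productive neighbour makes it see $1$, forcing it to $0$; a non-productive neighbour makes it see $0$, forcing it to $1$), so pendants act as NOT-gates and as read-offs and can be attached freely to negate a literal. Second, a vertex whose only two neighbours are private pendants is \emph{forced productive}: if it were non-productive, both pendants would be productive and it would see $2$ productive neighbours, contradicting $T[2]=1$.

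Next comes the clause gadget. Attach to a triple of input vertices a \emph{collector} vertex $m$ adjacent to all three of them and to one private pendant $p$; then $p=\neg m$, and writing $k$ for the number of productive inputs, $m$ sees $k$ productive neighbours when $m=1$ and $k+1$ when $m=0$. Checking these against $T$ shows the gadget has a valid internal completion precisely when $k\neq 1$, and a routine check confirms that for $k\in\{0,2,3\}$ such a completion exists and that $m,p$ impose nothing further. So a collector realizes the constraint ``not exactly one of these three (possibly pendant-negated) literals is true'', and a natural candidate source problem is the satisfiability of conjunctions of such constraints, which is a Schaefer-hard Boolean CSP. I would then add \emph{variable gadgets} and \emph{copy gadgets}, built from pendants, NOT-gates and forced-productive vertices, so that every occurrence of a variable gets its own input vertex provably equal to that variable's value while the variable's value itself is left free; wires between gadgets are pendant chains, which add no freedom.

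To finish I would prove equivalence in both directions: a satisfying assignment extends to a PNE by setting each variable's vertices accordingly and completing every gadget in a valid internal state; conversely, any PNE induces consistent truth values on the copy vertices (by the forced-complement behaviour of the wiring), and feasibility of every collector forces every constraint to hold. Since $T[0]=1$, the all-zero profile is never a PNE, so ``$G$ has an NTPNE'' coincides with ``$G$ has a PNE'' and no extra non-triviality bookkeeping is needed. It also has to be checked that the construction has polynomial size and, crucially, that there are \emph{no spurious} PNE, i.e.\ that at every vertex of every gadget the only locally stable configurations are the intended ones.

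\textbf{The main obstacle} is precisely this last point together with getting the variable and copy gadgets right. Each collector is adjacent to the vertices it reads, so it back-reacts on those vertices' equilibrium conditions; the variable/copy gadgets must therefore pin down each literal vertex's value robustly, no matter how many of its collector-neighbours happen to be productive, and this has to be arranged with essentially no arithmetic slack because the pattern is fixed and only three entries long. Verifying all of this comes down to a careful case analysis over the four regimes ($n=0$, $1$, $2$, $\ge 3$) at every vertex of the construction, and that is where the real work of the proof lies.
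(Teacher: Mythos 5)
Your framework is right---NP-hardness via a polynomial reduction from a ternary Boolean CSP, with the observation that $T[0]=1$ kills the non-triviality bookkeeping---and your local analysis of the pendant (NOT-gate), the forced-productive vertex, and the collector gadget is correct: the collector with its private pendant is internally completable exactly when $k\neq 1$, and the ``not exactly one of three literals'' language is indeed Schaefer-hard. But the proof has a genuine gap exactly where you place it, and it is not a routine verification: the variable gadgets, copy gadgets and wires are never constructed, and the primitives you propose for them do not work for this pattern. In particular, a ``pendant chain'' (a path) does not propagate a value: for $T=[1,0,1,0,0,\dots]$ a path admits equilibria such as $0{,}1$, $0{,}1{,}0$ and $1{,}0{,}0{,}1$, so depending on length and internal assignment its endpoints can be equal or unequal --- a bare path adds freedom rather than removing it. This is precisely why the paper's construction needs a $9$-node Negation Gadget (four triangles chained at shared vertices, with the crucial extra property that the attachment vertex $t_2$ is provably $0$ in every equilibrium) and a Copy Gadget built from two of them.

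The second unresolved point is the back-reaction you flag. In your clause gadget the collector $m$ can legitimately be productive when $k\in\{0,2\}$, so an input vertex cannot assume anything about $m$; yet under this pattern a vertex assigned $1$ must have \emph{exactly} $0$ or $2$ productive neighbours in total, leaving no slack to absorb an uncontrolled collector. Whether $m=1$ is globally consistent then depends on the internal state of the very variable gadgets that are missing, so the ``no spurious PNE'' claim is circular as stated. The paper escapes this by a different clause-gadget design: it reduces from ONE-IN-THREE 3SAT with a $21$-node gadget in which the three literal nodes form a triangle and each carries its own sub-gadget of triangles; a productive literal node is forced to take both of its productive neighbours from inside its sub-gadget, hence is saturated and can tolerate no productive external neighbour, and the Negation/Copy attachment points are separately proved to be $0$ in every equilibrium. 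Until you exhibit variable/copy/wire gadgets with analogous guarantees and redo the case analysis at the literal vertices with the collector's two possible states, the reduction is not established.
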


Before proving the theorem, we wish to provide basic intuition about the 0-Or-2-Neighbors BRP, by examining several simple graphs. Take for example a simple cycle. Since $T[2]=1$ (i.e. best response for two productive neighbors is 1), we have that any simple cycle admits a pure Nash equilibrium\footnote{In this pattern, any pure Nash equilibrium must also be non-trivial, since $T[0]=1$.}, assigning 1 to all nodes (see Figure \ref{fig_cycle}). However, looking at a simple path with $n$ nodes, we see that the all-ones assignment is never a pure Nash equilibrium. The reason for this is that $T[1]=0$ (i.e. best response for one productive neighbors is 0), and so the two nodes at both ends of the path, having only one productive neighbor, do not play best response. Nevertheless, any simple path does admit a pure Nash equilibrium. To see why, let us observe the three smallest paths, of length 2, 3 and 4. Notice that in a path of length two a PNE is given by the assignment 0,1; in a path of length three a PNE is given by the assignment 0,1,0; and in a path of length four a PNE is given by the assignment 1,0,0,1. We can use these assignment to achieve a PNE in any simple path: given a simple path of length $n$, if $n\equiv0\pmod{3}$ we use the path of length three as our basis, adding 0,1,0 to it as many times as needed; if $n\equiv1\pmod{3}$ we use the path of length four as our basis, adding 0,0,1 to it as many times as needed; and if $n\equiv2\pmod{3}$ we use the path of length two as our basis, adding 0,0,1 to it as many times as needed (see example in Figure \ref{fig_paths}).

\begin{figure}[h!]
\centering
\begin{minipage}[t]{.4\textwidth}
    \includegraphics[width=\textwidth]{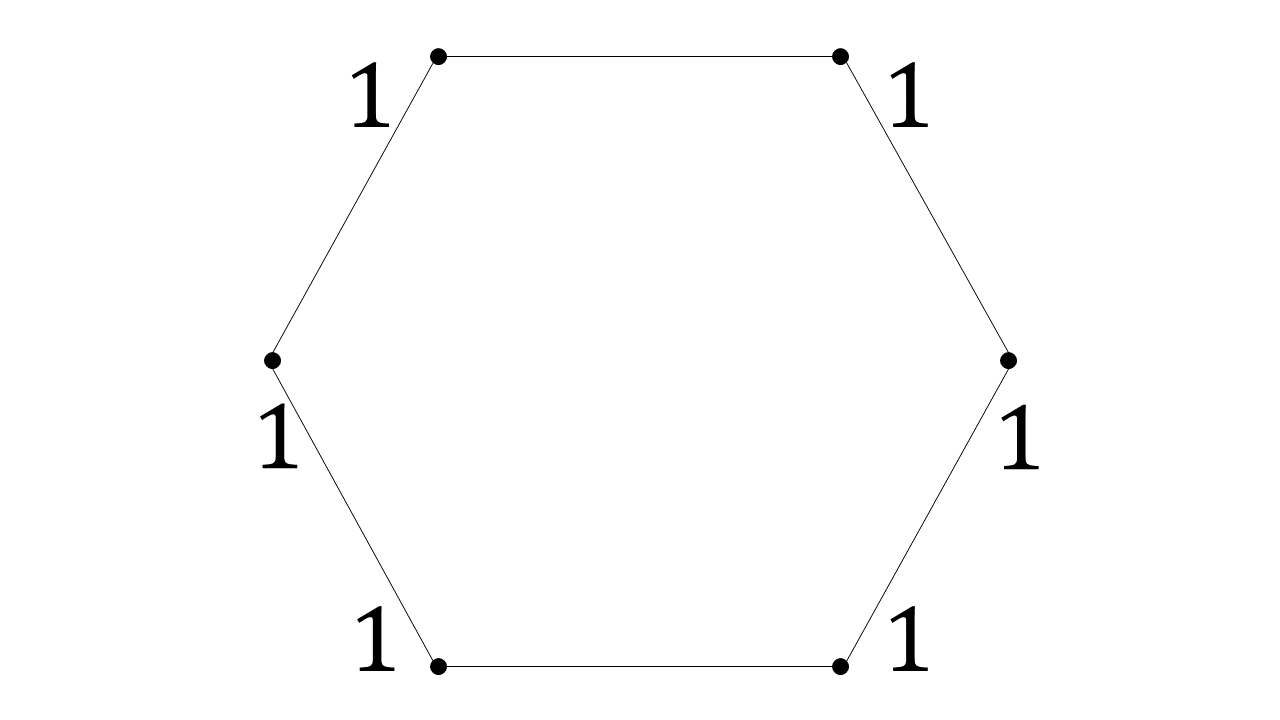}
    \caption{\textit{PNE in cycles.}}
    \label{fig_cycle}
\end{minipage}%
\begin{minipage}[t]{.4\textwidth}
    \includegraphics[width=\textwidth]{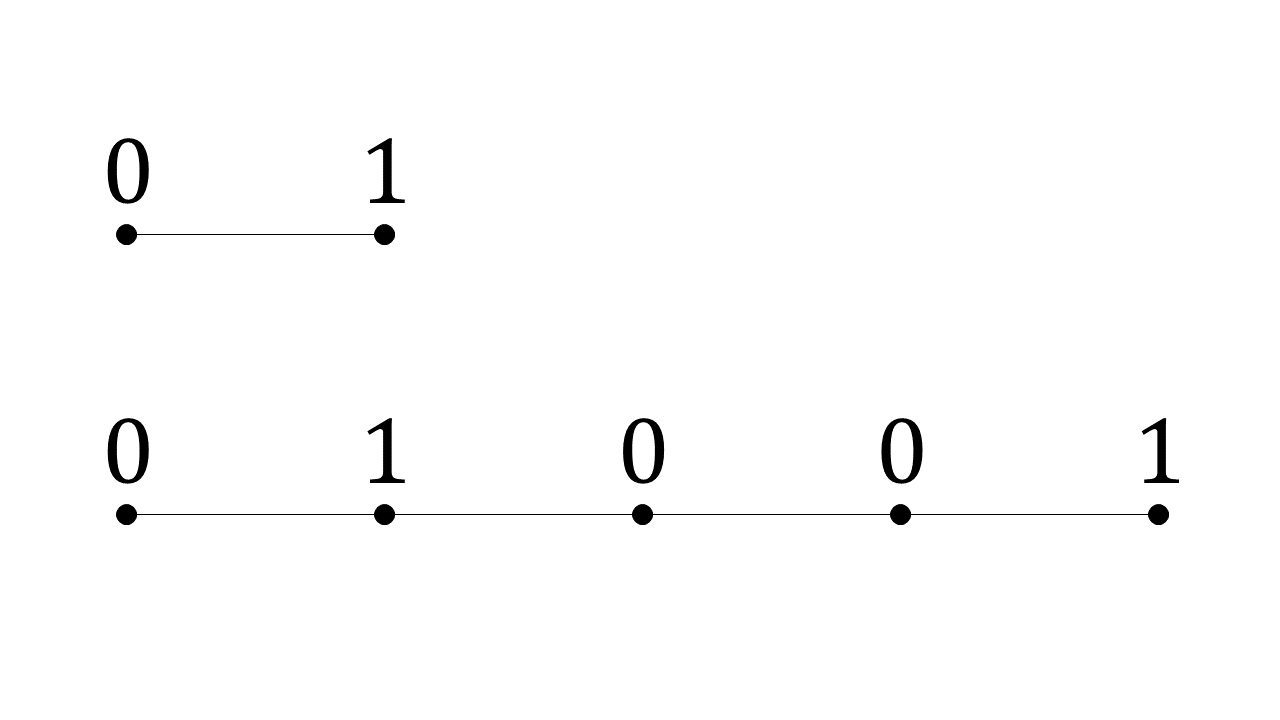}
    \caption{\textit{PNE in paths of lengths 2 and 5.}}
    \label{fig_paths}
\end{minipage}
\end{figure}

In contrast to the graphs discussed so far, there are graphs in which a pure Nash equilibrium doesn't exist for the 0-Or-2-Neighbors pattern. An example of this can be seen in a graph composed of four triangles, connected as a chain where each two neighboring triangles have a single overlapping vertex, as demonstrated in Figure \ref{fig_no_pne}. One may verify that no PNE exists in this graph. This specific structure will also be of use to us during our proof\footnote{The Negation Gadget defined throughout the proof of Theorem \ref{thrm_0_or_2} is constructed similarly to the graph described here.}.

\begin{figure}[h!]
        \centering
	\includegraphics[width=.5\textwidth]{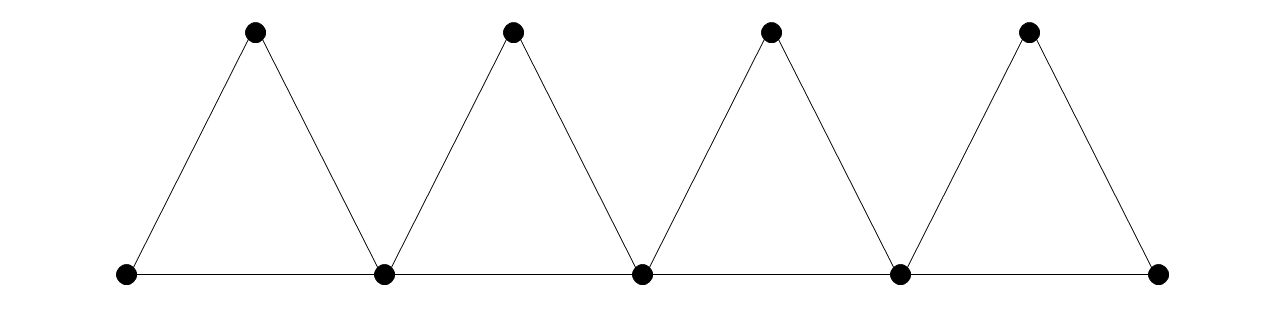}
	\caption{No PNE exists in this graph.}
	\label{fig_no_pne}
\end{figure}

\vspace{0.1in}

Having provided some intuition regarding the problem, we move on to prove Theorem \ref{thrm_0_or_2}. The reduction is from ONE-IN-THREE 3SAT, which is a well known NP-complete problem (see \cite{3sat}). In ONE-IN-THREE 3SAT, the input is a CNF formula with 3 literals in each clause, and the goal is to determine whether there exists a boolean assignment to the variables such that in each clause exactly one of the literals is assigned True.
We begin by introducing our Clause Gadget, which is a main component of the proof. Given a CNF formula, for each of its clauses we construct a 21-nodes Clause Gadget, in which three of the nodes, denoted $l_1, l_2, l_3$ (also referred to as the \textit{literal nodes}) represent the three literals of the matching clause. The purpose of this gadget is to enforce the property that in any NTPNE, exactly one literal node in the gadget will be assigned 1, which easily translates to the key property of a satisfying assignment in the ONE-IN-THREE 3SAT problem. 
The three literal nodes are connected to one another, forming a triangle. Additionally, for each $i\in \{1,2,3\}$, $l_i$ is connected to two other nodes $x_i,y_i$, which are also connected to one another, forming another triangle. Lastly, $x_i$ and $y_i$ each form yet another triangle, along with nodes $a_i,b_i$ and $c_i,d_i$ respectively. We refer to $x_i,y_i,a_i,b_i,c_i,d_i$ as the \textit{sub-gadget} of $l_i$. We note that out of the nodes of the Clause Gadget, only the literal nodes may be connected to other nodes outside of their gadget, a property on which we rely throughout the proof. The structure of the Clause Gadget is demonstrated in Figure \ref{fig_CG}, where each sub-gadget is colored differently.

\begin{figure}
        \centering
	\includegraphics[width=.4\textwidth]{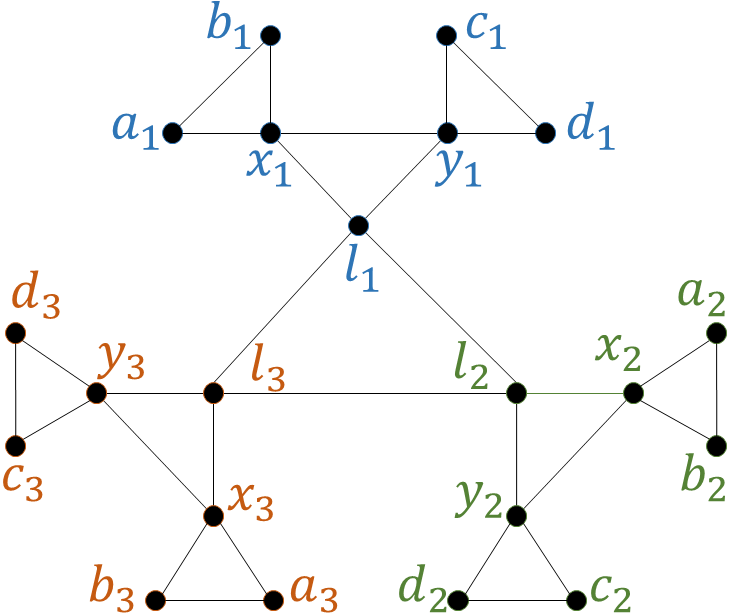}
	\caption{Clause Gadget.}
	\label{fig_CG}
\end{figure}

The next four lemmas lead us to the conclusion that the gadget indeed has the desired property mentioned above.

\begin{lemma}
\label{lem_literal_prod_nbrs}
In any NTPNE in a graph $G$ which includes a Clause Gadget $cg$, if a literal node $l_i$ of $cg$ is assigned 1 then so are its two neighbors from its respective sub-gadget, $x_i,y_i$. Furthermore, there exists an assignment to the sub-gadget of $l_i$ such that all its nodes play best response.
\end{lemma}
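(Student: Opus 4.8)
The plan is to reason entirely locally inside the Clause Gadget, exploiting that each node of the sub-gadget of $l_i$ has a completely fixed neighborhood: $N(a_i)=\{x_i,b_i\}$, $N(b_i)=\{x_i,a_i\}$, $N(c_i)=\{y_i,d_i\}$, $N(d_i)=\{y_i,c_i\}$, $N(x_i)=\{l_i,y_i,a_i,b_i\}$, $N(y_i)=\{l_i,x_i,c_i,d_i\}$ (only $l_i$ itself is permitted further, external, neighbors). The first facts I would record are two consequences of applying $T=[1,0,1,0,0,\dots]$ to the two ``outer'' triangles $\{x_i,a_i,b_i\}$ and $\{y_i,c_i,d_i\}$. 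In any NTPNE, writing strategies as their $\{0,1\}$ values, we have $a_i=T[x_i+b_i]$ and $b_i=T[x_i+a_i]$; since $T[0]=T[2]=1$ and $T[1]=0$, this gives: (i) if $x_i=0$ then $a_i\neq b_i$, so exactly one of $a_i,b_i$ is productive; and (ii) if $x_i=1$ then $a_i=b_i$. The same statements hold for $\{y_i,c_i,d_i\}$ around $y_i$.

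For the ``if'' direction I would argue by contradiction: suppose an NTPNE has $l_i=1$ but $x_i=0$. By (i), $a_i+b_i=1$, so the number of productive neighbors of $x_i$ equals $l_i+y_i+a_i+b_i=2+y_i$; since $x_i=T[2+y_i]$, $x_i=0$, and $T[2]=1$, this forces $y_i=1$. Now invoke the $y_i=1$ version of (ii) on $\{y_i,c_i,d_i\}$: both $c_i$ and $d_i$ then see a productive neighbor, so $c_i=T[1+d_i]$ and $d_i=T[1+c_i]$ force $c_i=d_i$, whence the number of productive neighbors of $y_i$ equals $l_i+x_i+c_i+d_i=1+2c_i\in\{1,3\}$ --- an odd index at which $T$ vanishes, contradicting $y_i=1$. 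Hence $x_i=1$, and $y_i=1$ then follows by the mirror-image argument (swap the roles of $(x_i,a_i,b_i)$ and $(y_i,c_i,d_i)$). Note that this uses only the fixed neighborhoods of $x_i$ and $y_i$, so the external edges at $l_i$ never enter.

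For the ``furthermore'' part I would simply exhibit a consistent completion and verify it against $T$. With $l_i=1$, set $x_i=y_i=1$ and $a_i=b_i=c_i=d_i=0$: each of $a_i,b_i,c_i,d_i$ then has exactly one productive neighbor ($T[1]=0$), and each of $x_i,y_i$ has exactly two ($T[2]=1$), so all six sub-gadget nodes play best response. (If one also wants the $l_i=0$ case for the reverse direction of the reduction, $x_i=y_i=0$, $a_i=c_i=1$, $b_i=d_i=0$ works by the same kind of finite check.)

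The step I expect to be the real obstacle is the forcing in the second paragraph. One cannot reach a contradiction from $x_i=0$ at the triangle $\{x_i,a_i,b_i\}$ alone --- $x_i=0$ is locally compatible with it --- so the argument must propagate the constraint through $y_i$ and out to $\{y_i,c_i,d_i\}$ before the parity obstruction bites. Keeping the productive-neighbor counts exact, and in particular using that $x_i,y_i$ (not being literal nodes) have no edges leaving the gadget, is the only delicate point; everything else is a finite lookup of $T$ on the arguments $0,1,2,3$.
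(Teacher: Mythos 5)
Your proof is correct and takes essentially the same approach as the paper: a local case analysis on $(x_i,y_i)$ using the forced parity of $a_i,b_i$ (resp.\ $c_i,d_i$) in the outer triangles, followed by the same explicit witness $x_i=y_i=1$, $a_i=b_i=c_i=d_i=0$. The only difference is organizational --- you extract the two triangle facts up front and argue by contradiction from $x_i=0$ rather than enumerating the three cases for $(x_i,y_i)$ directly, which covers exactly the same ground.
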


\begin{proof}
Divide into cases.

Case 1: If $x_i=y_i=0$, then if $a_i\ne b_i$ (meaning only one of them is assigned 1) then $x_i$ would have two productive neighbors and would not be playing its best response. However, if $a_i=b_i$ then $a_i$ and $b_i$ would not be playing their best response, and we reach a contradiction.

Case 2: If $x_i=1,y_i=0$ (the case where $x_i=0,y_i=1$ is, of course, symmetric) then $x_i$ must have exactly one more productive neighbor (either $a_i$ or $b_i$) in order to be playing best response. But then that node would not be playing best response, in contradiction.

Case 3: We are left with the option where $x_i=y_i=1$, where it is easy to verify that all nodes of the sub-gadget of $l_i$ are playing their best response if we set $a_i=b_i=c_i=d_i=0$.
\end{proof}

\begin{lemma}
\label{lem_one_literal_on}
In any NTPNE in a graph $G$ which includes a Clause Gadget $cg$, if one of the literal nodes $l_i$ of $cg$ is assigned 1 then the other two literal nodes of $cg$ must be assigned 0.
\end{lemma}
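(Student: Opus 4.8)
The plan is to use Lemma~\ref{lem_literal_prod_nbrs} to pin down the productive neighbors of an activated literal node, and then exploit the fact that $T[k]=1$ only for $k\in\{0,2\}$ to bound the total number of productive neighbors $l_i$ can have. Assume without loss of generality that $l_1=1$ in a given NTPNE. First I would invoke Lemma~\ref{lem_literal_prod_nbrs} to conclude that $x_1=y_1=1$, i.e.\ the two sub-gadget neighbors of $l_1$ are both productive.

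Next I would enumerate the neighborhood of $l_1$. By the construction of the Clause Gadget, the neighbors of $l_1$ inside $cg$ are exactly the other two literal nodes $l_2,l_3$ (from the literal triangle) together with $x_1,y_1$ (from the sub-gadget triangle), and $l_1$ may in principle have further neighbors only outside $cg$. Since $l_1=1$ is a best response, we must have $\sum_{j\in N(l_1)}s_j\in\{0,2\}$. But $x_1$ and $y_1$ already contribute $2$ to this sum, so the sum is at least $2$, forcing it to equal exactly $2$; hence every remaining neighbor of $l_1$ is assigned $0$. In particular $l_2=l_3=0$, which is the claim.

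I do not expect any genuine obstacle here: the argument is a one-step consequence of Lemma~\ref{lem_literal_prod_nbrs} plus the shape of $T$. The only thing to be careful about is to justify that $l_1$'s neighborhood within the gadget is precisely $\{l_2,l_3,x_1,y_1\}$ (so that no hidden intra-gadget edges inflate the count) and that possible external neighbors of $l_1$ only make the productive-neighbor count larger, never smaller — both of which follow directly from the stated gadget structure and the property that only literal nodes have edges leaving their gadget.
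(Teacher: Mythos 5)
Your proof is correct and follows essentially the same route as the paper's: invoke Lemma~\ref{lem_literal_prod_nbrs} to get $x_i=y_i=1$, then note that since $T[k]=1$ only for $k\in\{0,2\}$, the node $l_i$ already has its full quota of two productive neighbors and so $l_2=l_3=0$. The extra care you take about the exact neighborhood of $l_1$ is a harmless elaboration of the same one-line argument.
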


\begin{proof}
Since $l_i=1$, from Lemma \ref{lem_literal_prod_nbrs} we have that $x_i=y_i=1$. Therefore, $l_i$ has two productive neighbors and cannot have any more, and so we have that the other two literal nodes must play 0.
\end{proof}

\begin{lemma}
\label{lem_cg_pne}
In any graph $G$ which includes a Clause Gadget $cg$, if exactly one of the literal nodes of $cg$ is assigned 1 then there exists an assignment to the other nodes of $cg$ such that they all (excluding the literal nodes) play best response. In addition, in any such assignment, if the literal nodes have no productive neighbors outside cg, then they also play best response.
\end{lemma}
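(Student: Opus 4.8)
The plan is to handle the two assertions separately: an explicit construction for the existence part, and a short rigidity argument for the conditional part. By the symmetry of the Clause Gadget under permutations of its three literal nodes (each literal has an isomorphic sub-gadget, and the literals form a triangle), I may assume throughout that $l_1=1$ and $l_2=l_3=0$.

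For the existence part, I would assign the sub-gadget of $l_1$ exactly as in Case 3 of Lemma \ref{lem_literal_prod_nbrs}, namely $x_1=y_1=1$ and $a_1=b_1=c_1=d_1=0$; that lemma already certifies that these six nodes play best response when $l_1=1$. For each $j\in\{2,3\}$ I would set $x_j=y_j=0$, $a_j=c_j=1$, and $b_j=d_j=0$, and then verify the remaining nodes directly: $a_j$ sees its two neighbors $x_j,b_j$, both $0$, so its best response is $T[0]=1$; $b_j$ sees $x_j=0$ and $a_j=1$, one productive neighbor, so its best response is $T[1]=0$; the pair $c_j,d_j$ is identical; and $x_j$ sees $l_j=0$, $y_j=0$, $a_j=1$, $b_j=0$, i.e.\ exactly one productive neighbor, so its best response is $T[1]=0$, and symmetrically for $y_j$. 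Hence every non-literal node of $cg$ plays best response.

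For the conditional part, let any assignment to the other nodes of $cg$ be given under which all non-literal nodes play best response, and assume the literal nodes have no productive neighbors outside $cg$. I would show the positions of the $x_i,y_i$ are essentially forced. Since $l_1=1$ and its sub-gadget nodes play best response, the case analysis in the proof of Lemma \ref{lem_literal_prod_nbrs} (its Cases 1 and 2 both end in contradiction) gives $x_1=y_1=1$; together with $l_2=l_3=0$ this means $l_1$ sees exactly two productive neighbors, so $T[2]=1=l_1$ is its best response. For $j\in\{2,3\}$ I would argue $x_j=y_j$: if, say, $x_j=1$ and $y_j=0$, then $y_j$ already has the productive neighbor $x_j$, so to make $0$ its best response it must not see exactly two productive neighbors, forcing $c_j=d_j$; but then $c_j$ (with neighbors $y_j=0$ and $d_j$) sees $0$ productive neighbors if $c_j=d_j=0$ and $1$ if $c_j=d_j=1$, with best responses $1$ and $0$ respectively, contradicting $c_j=d_j$ in both cases. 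By the $x_j\leftrightarrow y_j$ symmetry the case $x_j=0,y_j=1$ is excluded as well, so $x_j=y_j$. Therefore $l_j$ sees $l_1=1$, the third literal $0$, and either zero or two productive neighbors among $\{x_j,y_j\}$, i.e.\ one or three productive neighbors in total; since $T[1]=T[3]=0=l_j$, node $l_j$ plays its best response too.

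The only genuinely delicate point is the rigidity step of the conditional part — showing $x_j=y_j$ for the two ``off'' literals — since this is what guarantees the gadget behaves as intended no matter how the rest of the graph is configured; the rest is bookkeeping that mirrors the proof of Lemma \ref{lem_literal_prod_nbrs}.
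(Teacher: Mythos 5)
Your proof is correct and follows essentially the same route as the paper's: the same w.l.o.g.\ reduction to $l_1=1$, the same reuse of Lemma \ref{lem_literal_prod_nbrs} for the productive literal's sub-gadget, and the same explicit witness assignment ($x_j=y_j=0$, $a_j\ne b_j$, $c_j\ne d_j$) for the existence part. If anything, your treatment of the ``in any such assignment'' clause is slightly more careful than the paper's, since you force $x_j=y_j$ for the off literals using only the best responses of non-literal nodes (and dispose of the $x_j=y_j=1$ possibility via $T[3]=0$ rather than excluding it), whereas the paper's derivation partly leans on the literal nodes' own best responses.
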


\begin{proof}
W.l.o.g. assume that $l_1=1, l_2=l_3=0$. Let us observe several details that must hold in such an assignment. Focusing first on the sub-gadget of $l_1$, according to Lemma \ref{lem_literal_prod_nbrs} there exists an assignment the nodes of this sub-gadget such that they all play best response. Furthermore, Lemma \ref{lem_literal_prod_nbrs} tells us that $x_1=y_1=1$ (and so $l_1$ has two productive neighbors within $cg$).
We move on to the sub-gadget of $l_2$. If $x_2\ne y_2$ then $l_2$ would have 2 productive neighbors and would not be playing its best response. If $x_2=y_2=1$ then there is no assignment to $a_2,b_2$ s.t. $a_2,b_2,x_2$ all play their best response. Therefore $x_2=y_2=0$. We are left only with the option of setting $a_2\ne b_2$ and $c_2\ne d_2$ (for instance $a_2=c_2=1, b_2=d_2=0$).
The sub-gadget of $l_3$ is symmetric to that of $l_2$.
One may verify that in this assignment all nodes of $cg$ excluding the literal nodes indeed play best response, and that if the literal nodes have no productive neighbors outside $cg$ then they also play best response.
\end{proof}

\begin{lemma}
\label{lem_all_literals_0}
In any graph $G$ which includes a Clause Gadget $cg$, if all three of the literal nodes of $cg$ are assigned 0, and the literal nodes do not have any productive neighbors outside of $cg$, then the assignment is not a PNE.
\end{lemma}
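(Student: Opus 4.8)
The plan is to assume for contradiction that the given assignment is a PNE and to derive a contradiction by tracing what the best-response condition forces inside a single sub-gadget. First I would focus on one literal node, say $l_1$. Since $l_1 = 0$ must be its best response while $T[0] = T[2] = 1$, and the only neighbors of $l_1$ that can possibly be productive are $x_1$ and $y_1$ (the other two literal nodes are $0$ by hypothesis, and $l_1$ has no productive neighbor outside $cg$), it follows that $l_1$ has exactly one productive neighbor; hence exactly one of $x_1, y_1$ equals $1$. Using the symmetry of the sub-gadget between the triangle $\{x_1, a_1, b_1\}$ and the triangle $\{y_1, c_1, d_1\}$, I would assume without loss of generality that $x_1 = 1$ and $y_1 = 0$.

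Next I would analyse $y_1$, whose neighbors are exactly $l_1, x_1, c_1, d_1$, of which $l_1 = 0$ and $x_1 = 1$; so $y_1$ has between $1$ and $3$ productive neighbors. Since $y_1 = 0$ is its best response and $T[2] = 1$, it cannot have exactly $2$ productive neighbors, so it has $1$ or $3$, which is equivalent to $c_1 = d_1$ (either both $0$ or both $1$).

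Finally I would close both cases at the node $c_1$, whose only neighbors are $y_1$ and $d_1$. If $c_1 = d_1 = 0$, then $c_1$ has $0$ productive neighbors, so its best response is $T[0] = 1 \neq 0$. If $c_1 = d_1 = 1$, then $c_1$ has exactly one productive neighbor (namely $d_1$, since $y_1 = 0$), so its best response is $T[1] = 0 \neq 1$. In either case $c_1$ violates the equilibrium condition, a contradiction, so the assignment is not a PNE.

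The argument is an elementary case analysis, so I do not expect a genuine obstacle; the only points requiring care are (i) correctly bookkeeping the neighborhoods inside the gadget and explicitly invoking the hypothesis that the literal nodes have no productive neighbors outside $cg$, so that $l_1$'s only candidate productive neighbors are $x_1$ and $y_1$; and (ii) the symmetry observation that legitimizes the choice $x_1 = 1,\ y_1 = 0$ without loss of generality.
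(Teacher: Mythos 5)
Your proof is correct and follows essentially the same route as the paper: deduce that exactly one of $x_1,y_1$ is productive, take $x_1=1$, $y_1=0$ w.l.o.g., and derive a contradiction from the triangle $\{y_1,c_1,d_1\}$. The only cosmetic difference is that you first infer $c_1=d_1$ from $y_1$'s best response and then refute both sub-cases at $c_1$, whereas the paper rules out $c_1=d_1$ via $c_1,d_1$ and $c_1\neq d_1$ via $y_1$; the case analysis is identical in substance.
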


\begin{proof}
Assume by way of contradiction that there exists a PNE where $l_1=l_2=l_3=0$, and all three of them have no productive neighbors outside $cg$. It must be that the other two neighbors of $l_1$, $x_1,y_1$, are assigned with different values (otherwise $l_1$ is not playing its best response). W.l.o.g. assume $x_1=1,y_1=0$. Now, if the remaining neighbors of $y_1$ ($c_1$ and $d_1$) are both assigned with 0 or both assigned with 1, then they themselves would not be playing their best response. On the other hand, if we assign them with different values then $y_1$ would not be playing its best response, and so we have reached a contradiction.
\end{proof}

So far, we have seen that in any PNE which includes a Clause Gadget, it must be that exactly one of the literal nodes of that gadget is assigned with 1, as long as the literal nodes don't have productive neighbors outside of their Clause Gadget. As we introduce the external nodes that will be connected to the literal nodes, we will show that indeed they all must be assigned 0 in any PNE, and thus a literal node cannot have any productive neighbor outside of its Clause Gadget, which will finalize the property we were looking to achieve with the Clause Gadget.

Our next goal is to make sure the translation between solutions from one domain to the other is always valid. Specifically, we wish to ensure that in any PNE in our constructed graph, if any two literal nodes represent the same variable in the CNF formula then they will be assigned the same value, and if they represent a variable and its negation then they will be assigned opposite values. We begin with the latter, introducing our Negation Gadget. The goal of the Negation Gadget is to force opposite assignments to two chosen nodes, in any Nash equilibrium. The Negation Gadget is composed of 9 nodes: five 'bottom' nodes $b_1,b_2,b_3,b_4,b_5$, and four 'top' nodes $t_1,t_2,t_3,t_4$, and for each $i\leq 4$ we create the edges $\{b_i,b_{i+1}\}$, $\{t_i,b_i\}$ and $\{t_i,b_{i+1}\}$. It can intuitively be described as four triangles that are connected as a chain. Say we have two nodes $u,v$ which we want to force to have opposite assignments, we simply connect them both to node $t_2$ of a Negation Gadget, as demonstrated in Figure \ref{fig_NG}.

\begin{lemma}
\label{lem_neg}
In any Nash equilibrium in a graph $G$ which includes two nodes $u,v$ connected through a Negation Gadget $ng$, $u$ and $v$ must have different assignments. Moreover, the node $t_2$ of $ng$, to which $u$ and $v$ are connected, must be assigned 0. In addition, if indeed $u\neq v$ and $t_2=0$, there exists an assignment to the the nodes of $ng$ such that they all play best response.
\end{lemma}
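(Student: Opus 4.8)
The plan is to exploit the rigidity the $0$-Or-$2$ pattern imposes on the two ``end triangles'' of the gadget and then propagate the resulting constraints inward until they pin down $t_2$ and the parity of $s_u+s_v$. The first ingredient is a local fact about the left end triangle $(t_1,b_1,b_2)$: since $t_1$ and $b_1$ each have exactly two neighbours, namely each other and $b_2$, checking the four possibilities for $(s_{t_1},s_{b_1})$ against the equilibrium equations $s_{t_1}=T[s_{b_1}+s_{b_2}]$ and $s_{b_1}=T[s_{t_1}+s_{b_2}]$ shows that in any Nash equilibrium $s_{t_1}+s_{b_1}=1$ if $s_{b_2}=0$, and $s_{t_1}+s_{b_1}\in\{0,2\}$ if $s_{b_2}=1$. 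By the left--right symmetry of the gadget (swap $1\leftrightarrow 5$, $2\leftrightarrow 4$, fix $3$) the mirror statement holds for $(t_4,b_4,b_5)$ relative to $b_4$. Throughout, only $t_2$ touches the rest of $G$ (via $u,v$), exactly as the literal nodes are the only external-facing nodes of a Clause Gadget, so the equilibrium equations at the other eight nodes involve only gadget-internal values.

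Next I would push these constraints toward the centre. The neighbours of $b_2$ are $t_1,b_1,b_3,t_2$, so its productive-neighbour count is $(s_{t_1}+s_{b_1})+s_{b_3}+s_{t_2}$; substituting the two cases above (for both values of $s_{b_2}$) and using $T[1]=0$, $T[0]=T[2]=1$, $T[3]=T[4]=0$, a short case analysis shows that in every case $s_{t_2}=s_{b_3}$, and symmetrically $s_{t_3}=s_{b_3}$. With this, the middle of the gadget is over-constrained: the equation at $b_3$ (neighbours $t_2,b_2,b_4,t_3$) becomes $s_{b_3}=T[2s_{b_3}+s_{b_2}+s_{b_4}]$, forcing $s_{b_2}=s_{b_4}=0$ when $s_{b_3}=1$ and $s_{b_2}+s_{b_4}=1$ when $s_{b_3}=0$; while the equation at $t_3$ (whose only neighbours are $b_3,b_4$), together with $s_{t_3}=s_{b_3}$, reads $s_{b_3}=T[s_{b_3}+s_{b_4}]$ and forces $s_{b_4}=1$ in either case. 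These two are incompatible when $s_{b_3}=1$, so $s_{b_3}=0$, and then $s_{t_2}=s_{t_3}=0$, $s_{b_2}=0$, $s_{b_4}=1$.

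Now both conclusions follow: the equation at $t_2$ (neighbours $b_2,b_3,u,v$) reads $0=s_{t_2}=T[s_{b_2}+s_{b_3}+s_u+s_v]=T[s_u+s_v]$, so $s_u+s_v\notin\{0,2\}$, i.e.\ $s_u\ne s_v$, and $s_{t_2}=0$, as required. For the converse part I would simply exhibit a completion: assuming $s_u+s_v=1$ and $t_2=0$ are prescribed, set $b_2=b_3=b_5=0$, $b_4=1$, $t_1=1$, $b_1=0$, $t_3=t_4=0$, and check directly that each of the nine gadget nodes plays its best response under these values (this is a routine evaluation of $T$ at the relevant neighbour-counts).

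The main obstacle is the step establishing $s_{t_2}=s_{b_3}$ (and its mirror): it is the only place where the end-triangle dichotomy must be combined with a best-response equation, and one has to verify that across all combinations of the value of $s_{b_2}$, the value of $s_{t_1}+s_{b_1}$, and the value of $s_{b_3}+s_{t_2}$, the conclusion is always forced — in particular the branch $s_{b_2}=1$, $s_{t_1}+s_{b_1}=2$ relies on $T[2]=1$ while $T[3]=T[4]=0$. Once that identity is in hand, everything else is mechanical propagation of the best-response equations around the short chain, so no further difficulty is expected.
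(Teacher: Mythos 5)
Your proof is correct, and it takes a genuinely different route from the paper's. The paper argues by contradiction in two separate sweeps: it assumes $u=v=0$ and then $u=v=1$, branches on the value of $t_2$, and in each branch chases the best-response equations outward until some node cannot be satisfied; it then repeats a similar case analysis to show $t_2=0$. You instead work inward from the two end triangles: the dichotomy $s_{t_1}+s_{b_1}=1$ iff $s_{b_2}=0$ (and its mirror at $b_4$) feeds into the equations at $b_2$ and $b_4$ to give $s_{t_2}=s_{b_3}=s_{t_3}$, and the pair of equations at $b_3$ and $t_3$ then over-constrains the middle, forcing the entire interior $b_2=b_3=t_2=t_3=0$, $b_4=1$ \emph{independently of} $s_u,s_v$. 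Both conclusions of the lemma ($t_2=0$ and $s_u\neq s_v$) then fall out of the single best-response equation at $t_2$, rather than being proved separately. Your computations check out at every step (in particular the key identity $s_{t_2}=s_{b_3}$ holds in all branches, including $s_{b_2}=1$, $s_{t_1}+s_{b_1}=2$, where $T[2]=1$ but $T[3]=T[4]=0$ forces $s_{b_3}=s_{t_2}=0$), your explicit completion $t_1=b_4=1$, all else $0$, is easily verified, and you correctly note that only $t_2$ has neighbors outside the gadget, which is what licenses treating the other eight equations as purely internal. What your approach buys is a stronger structural statement (the gadget's interior is rigid in any equilibrium, not merely inconsistent with $u=v$) and a single unified derivation in place of several disjoint contradiction arguments; the paper's version is more ad hoc but requires no preliminary lemma about the end triangles.
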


\begin{proof}
We first show that $u$ and $v$ must have different assignments, dividing into two cases. 

Case 1: Assume by way of contradiction that $u=v=0$. We divide into two sub-cases, where in the first one $t_2=0$; in this case, exactly one of the two remaining neighbors of $t_2$ must be assigned 1 in order for $t_2$ itself to be playing best response. If $b_2=0$ then $b_3=1$ and so, looking at $t_1,b_1$ (the remaining neighbors of $b_2$), we see that any assignment to them results either in $b_2$ not playing best response, or in $t_1$ not playing best response, in contradiction. If, however, $b_3=0$, then $b_2=1$, and so, symmetrically, looking at $t_3,b_4$ (the remaining neighbors of $b_3$) we see that any assignment to them results either in $b_3$ not playing best response, or in $t_3$ not playing best response, in contradiction.
In the second sub-case, where $t_2=1$, we have that its two remaining neighbors must be assigned the same value in order for $t_2$ itself to be playing best response. If $b_2=b_3=0$ then again there is no assignment to $b_1,t_1$ s.t. all of $b_1,t_1,b_2$ play best response, and if $b_2=b_3=1$ then one may verify that there is no assignment to $t_3,t_4,b_4,b_5$ s.t. all of $t_3,t_4,b_3,b_4,b_5$ play best response, and so we reach a contradiction.

Case 2: Assume $u=v=1$. Then we again divide into sub-cases according to $t_2$'s assignment. If $t_2=0$, it must have at least one more productive neighbor in order to play best response. The assignments where $b_2=b_3=1$ or $b_2=0,b_3=1$ are easily disqualified, seeing as there is no assignment to $t_1,b_1$ s.t. $t_1,b_1,b_2$ all play best response. If $b_2=1,b_3=0$ then it must hold that $t_3=b_4$ in order for $b_3$ to play best response, but this would mean that $t_3$ is not playing best response, in contradiction.
If $t_2=1$, then its two remaining neighbors $b_2,b_3$ must be set to 0 in order for it to play best response, and then there is no assignment to $b_1,t_1$ s.t. all of $t_1,b_1,b_2$ play best response, in contradiction.

And so it cannot be that $u=v$. We move on to show that $t_2$ must play 0. Assume by way of contradiction that $t_2=1$. Then, seeing as exactly one of $u,v$ is productive, $t_2$ must have exactly one more productive neighbor in order to play best response. If $b_2=1,b_3=0$ we reach a contradiction as there is no assignment to $t_1,b_1$ s.t. $t_1,b_1,b_2$ all play best response. If $b_2=0,b_3=1$ we reach a contradiction as there is no assignment to $t_3,t_4,b_4,b_5$ s.t. all of $t_3,t_4,b_3,b_4,b_5$ play best response.
Lastly, one may verify that in the assignment where $t_1=b_4=1, b_1=b_2=b_3=b_5=t_2=t_3=t_4=0$ all nodes of the gadget play best response.
\end{proof}

Now, for each variable that appears in the CNF formula, we choose one instance of it and one instance of its negation\footnote{We will soon ensure that instances of the same variable would get the same assignment in any PNE, and thus it is sufficient to negate the assignments of only one instance of a variable and its negation.} and connect the literal nodes representing these instances via a Negation Gadget, thus ensuring they are assigned opposite values in any PNE, according to Lemma \ref{lem_neg}. We note that this is not the only place where we use this gadget, as we will see shortly.

We move on to introduce our Copy Gadget, which we will use to force literal nodes which represent the same variable to have the same assignment in any PNE.
The Copy Gadget is composed of two negation gadgets $ng_1,ng_2$, and two additional nodes $x,y$ which have an edge between them. Say we have two nodes $u,v$ which we want to force to have the same assignment in any PNE, then we simply connect $u$ and $x$ to $ng_1$, and we connect $v$ and $x$ to $ng_2$. The gadget is demonstrated in Figure \ref{fig_CPG}.

\begin{figure}[h!]
\centering
\begin{minipage}[t]{.5\textwidth}
    \includegraphics[width=\textwidth]{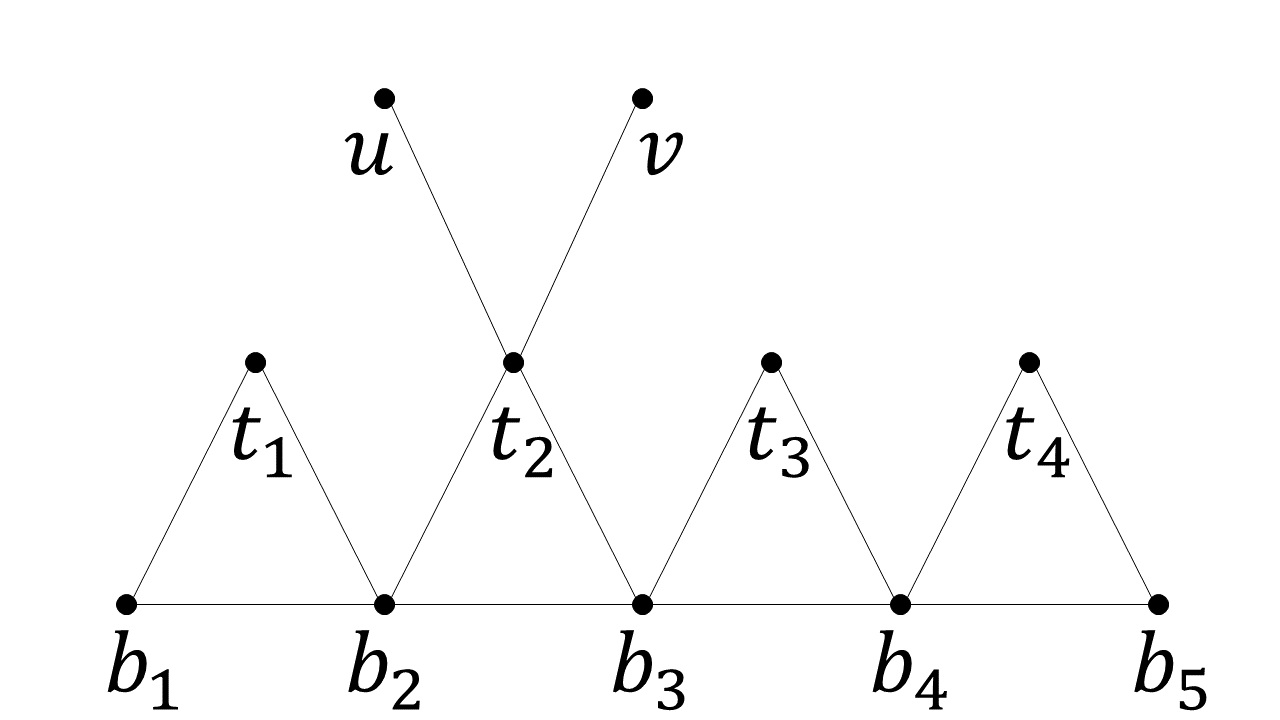}
    \caption{\textit{Negation Gadget connecting $u$ and $v$.}}
    \label{fig_NG}
\end{minipage}%
\begin{minipage}[t]{.5\textwidth}
    \includegraphics[width=\textwidth]{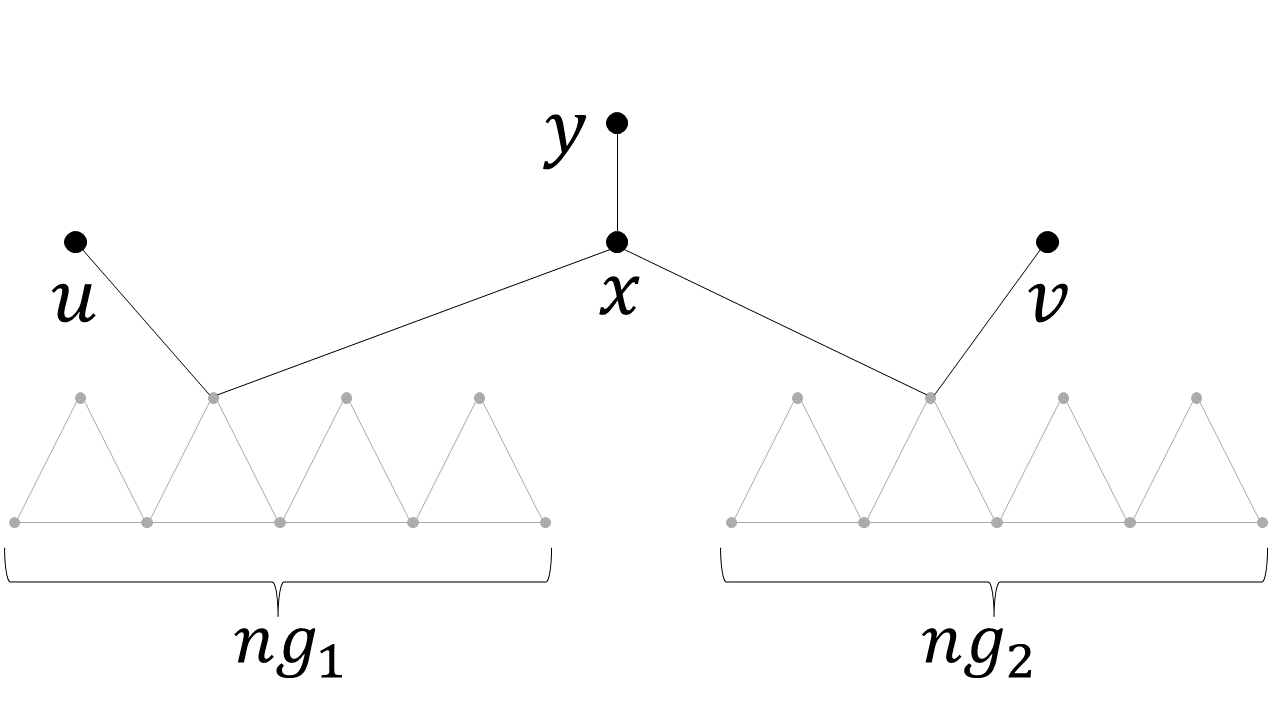}
    \caption{\textit{Copy Gadget connecting $u$ and $v$.}}
    \label{fig_CPG}
\end{minipage}
\end{figure}

\begin{lemma}
\label{lem_cpg}
In any Nash equilibrium in a graph $G$ which includes two nodes $u,v$ connected through a Copy Gadget $cpg$, $u,v$ must have the same assignment, and must have no productive neighbors from $cpg$. In addition, if $u=v$ then there exists an assignment to the nodes of $cpg$ s.t. all of them play best response.
\end{lemma}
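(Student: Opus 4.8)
The plan is to derive the ``copy'' behaviour directly from the properties of the Negation Gadget established in Lemma~\ref{lem_neg}, and then to verify the two free nodes $x,y$ of the Copy Gadget by hand, using only $T[0]=1$ and $T[1]=0$.

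For the forward direction, fix any Nash equilibrium of $G$. Apply Lemma~\ref{lem_neg} to the negation gadget $ng_1$, whose two external nodes are $u$ and $x$: this gives $u\neq x$, and that the node $t_2$ of $ng_1$ is assigned $0$. Apply it again to $ng_2$, whose external nodes are $v$ and $x$: this gives $v\neq x$, and that the node $t_2$ of $ng_2$ is assigned $0$. Since the strategy space is $\{0,1\}$, $u\neq x$ and $v\neq x$ together force $u=v$. Moreover, the only neighbor of $u$ (resp.\ $v$) inside $cpg$ is the node $t_2$ of $ng_1$ (resp.\ of $ng_2$), both of which we have just shown are assigned $0$; hence $u$ and $v$ have no productive neighbor from $cpg$, as claimed.

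For the converse, suppose $u=v$ and build a witnessing assignment. Set $x=1-u$ and $y=u$ (so $x=1,y=0$ when $u=v=0$, and $x=0,y=1$ when $u=v=1$), and set the $t_2$ node of each of $ng_1,ng_2$ to $0$. Since $u\neq x$ and $v\neq x$, Lemma~\ref{lem_neg} yields an assignment to all nodes of $ng_1$ and of $ng_2$ (with their $t_2$ nodes equal to $0$) under which every node of these two gadgets plays best response. It remains to check $x$ and $y$. The only neighbor of $y$ in $G$ is $x$, so $y$ has $1-u$ productive neighbors, and $y=u=T[1-u]$ (using $T[0]=1,\ T[1]=0$), so $y$ plays best response. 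The neighbors of $x$ are $y$ together with the two $t_2$ nodes (which are $0$), so $x$ has exactly $u$ productive neighbors, and $x=1-u=T[u]$, so $x$ plays best response. Hence every node of $cpg$ plays best response.

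The argument is essentially immediate once Lemma~\ref{lem_neg} is available; the only point that requires any thought is the role of the auxiliary node $y$. Were $y$ absent, $x$ would be adjacent inside $cpg$ only to the two forced-to-$0$ nodes $t_2$ of $ng_1$ and $ng_2$, hence pinned to $T[0]=1$, which would in turn force $u=v=0$ and make the gadget incapable of ``copying'' the value $1$; the node $y$ exists precisely to give $x$ the freedom to take either value, and this is what makes both cases of the converse go through. Beyond carefully tracking which nodes of the two negation gadgets are adjacent to $u$, $v$ and $x$, I do not expect a real obstacle here.
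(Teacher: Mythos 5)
Your proposal is correct and follows essentially the same route as the paper's own proof: both directions are derived from Lemma~\ref{lem_neg} applied to $ng_1$ and $ng_2$, with the same choice of $x=1-u$, $y=u$ in the converse. The only difference is that you carry out explicitly the best-response check for $x$ and $y$ that the paper leaves as ``one may verify,'' which is a welcome addition but not a different argument.
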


\begin{proof}
We first show that $u$ and $v$ must have the same assignment. This follows directly from the fact that $x$ is connected to both $u$ and $v$ via a Negation Gadget. Therefore, from Lemma \ref{lem_neg} we have that $x\ne u$ and $x\ne v$, and so $u=v$. 
Lemma \ref{lem_neg} also tells us that the Negation Gadget cannot add productive neighbors to the nodes that are connected to it in any PNE, and therefore $u$ and $v$ have no productive neighbors from $cpg$.
Lastly, we show that there exists an assignment to the nodes of $cpg$ s.t. they all play best response. From Lemma \ref{lem_neg} $x$ cannot have any productive neighbors from $ng_1$ or $ng_2$. Therefore, if $u=v=0$ then we can assign $x=1,y=0$, and if $u=v=1$ then we can assign $x=0,y=1$. In both cases, we assign $ng_1$ and $ng_2$ as suggested in the proof of Lemma \ref{lem_neg}. One may verify that in this assignment indeed all nodes of $cpg$ play best response.
\end{proof}

Now, for each variable in the CNF formula, we connect all the literal nodes representing its different instances via a chain of copy gadgets, thus (transitively) ensuring they are all assigned the same value in any PNE, according to Lemma \ref{lem_cpg}.

Given these lemmas and the graph we constructed, we can now prove Theorem \ref{thrm_0_or_2}.

\begin{proof}(Theorem \ref{thrm_0_or_2})
Given a ONE-IN-THREE 3SAT instance, we construct a graph $G$ as described previously\footnote{Note that we do not need to explicitly represent the conjunction between the clauses: It is given to us implicitly by the fact that each Clause Gadget must independently satisfy the 1-In-3 property in any Nash equilibrium.}: For each clause we create a Clause Gadget, we connect all literal nodes representing the same variable through a chain of Copy Gadgets, and for each variable we choose one instance of it and one instance of its negation, and connect the literal nodes representing those instances with a Negation Gadget. If there exists a satisfying assignment to the 3SAT problem, we can set all literal nodes according to the assignment of their matching variable, and set all other nodes as described throughout Lemmas \ref{lem_cg_pne}, \ref{lem_neg} and \ref{lem_cpg}, and according to those lemmas, we get a pure Nash equilibrium.
On the opposite direction, if there exists a non-trivial pure Nash equilibrium, then by Lemmas \ref{lem_one_literal_on} and \ref{lem_all_literals_0} in each Clause Gadget exactly one literal node is assigned 1, and by Lemmas \ref{lem_neg} and \ref{lem_cpg} we have that literal nodes have the same assignment if they represent the same variable, and opposite ones if they represent a variable and its negation. Thus we can translate the NTPNE into a satisfying ONE-IN-THREE 3SAT assignment, assigning 'True' to variables whose literal nodes are set to 1, and 'False' otherwise.
\end{proof}

We now wish to expand this result to two slightly more general classes of patterns.
Firstly, we notice that the graph constructed throughout the proof of Theorem \ref{thrm_0_or_2} is bounded\footnote{A literal node is connected to 4 nodes within its clause gadget, and possibly 2 nodes from copy gadgets or 1 node from a negation gadget and 1 node from a copy gadget (assuming we connect the negation gadgets at the end of their respective Copy-Gadget-chains).} by a maximum degree of 6. Therefore, the proof is indifferent to entries of the pattern from index 7 onward, which means it holds for any pattern that agrees with the first 7 entries of the 0-Or-2-Neighbors pattern.

\begin{corollary}
\label{cor_after_6_doesnt_matter}
Let $T$ be a BRP such that:
\begin{itemize}
    \item $T[0]=T[2]=1$
    \item $\forall k\in \{1,3,4,5,6\} \;\; T[k]=0$
\end{itemize}
Then NTPNE($T$) is NP-complete.
\end{corollary}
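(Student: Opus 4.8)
The plan is to observe that Corollary~\ref{cor_after_6_doesnt_matter} is essentially immediate from the proof of Theorem~\ref{thrm_0_or_2} rather than requiring a new construction. First I would note that the entire reduction --- the Clause Gadget, the Negation Gadget, the Copy Gadget, and the way literal nodes are linked by chains of Copy Gadgets and capped by a Negation Gadget --- produces a graph $G$ in which every vertex has degree at most $6$. Concretely, within a Clause Gadget the nodes $a_i,b_i,c_i,d_i$ have degree $2$, the nodes $x_i,y_i$ have degree $4$, and a literal node $l_i$ has degree $2$ inside the literal-triangle plus $2$ edges to $x_i,y_i$, i.e. degree $4$ inside its gadget, and then at most $2$ further external edges (to Copy/Negation Gadgets), for a total of at most $6$; inside a Negation Gadget the nodes $t_i$ and $b_i$ have degree at most $4$ (the node $t_2$ gains $2$ external edges, reaching degree $4$ as well, since its internal degree is $2$); the nodes $x,y$ of a Copy Gadget have small degree. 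Hence every vertex of $G$ has at most $6$ neighbors.

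The key point is then that in any PGG on such a graph, for every vertex $v$ the quantity $\sum_{j\in N(v)} s_j$ lies in $\{0,1,\dots,6\}$, so the best-response condition $s_v = T[\sum_{j\in N(v)} s_j]$ only ever queries entries $T[0],\dots,T[6]$ of the pattern. Therefore, if $T$ is any BRP agreeing with the 0-Or-2-Neighbors pattern on indices $0$ through $6$ (which is exactly the hypothesis: $T[0]=T[2]=1$ and $T[k]=0$ for $k\in\{1,3,4,5,6\}$), then a strategy profile $s$ is an NTPNE of the game defined by $T$ on $G$ if and only if it is an NTPNE of the game defined by the 0-Or-2-Neighbors pattern on $G$. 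Consequently, the reduction from ONE-IN-THREE 3SAT built in the proof of Theorem~\ref{thrm_0_or_2}, applied verbatim, is also a valid reduction to NTPNE($T$): a satisfying assignment yields an NTPNE (using the explicit assignments to the gadget nodes from Lemmas~\ref{lem_cg_pne}, \ref{lem_neg}, \ref{lem_cpg}, all of which involve only the first few entries of the pattern), and conversely any NTPNE forces the 1-In-3 property on each Clause Gadget and consistency across Copy/Negation Gadgets exactly as before, yielding a satisfying assignment. Combined with membership in NP (noted in Section~\ref{sec_model} for finite patterns; and even for infinite $T$ verification is polynomial), this gives NP-completeness.

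The main thing to be careful about --- the only real obstacle --- is verifying that the degree bound of $6$ genuinely holds for \emph{every} vertex in the constructed graph, including the worst case of a literal node that simultaneously sits at the junction of two Copy Gadgets, or at the end of a Copy-Gadget chain where a Negation Gadget is also attached. This is precisely the content of the footnote in the paragraph preceding the corollary, and I would make it explicit: a literal node has $2$ edges inside the literal-triangle, $2$ edges to its sub-gadget nodes $x_i,y_i$, and at most $2$ external edges (either two Copy-Gadget connections, or one Copy-Gadget connection plus one Negation-Gadget connection), so its degree never exceeds $6$; no other vertex in any gadget exceeds degree $4$. Once that bound is confirmed, the rest of the argument is the ``only the first $7$ entries matter'' observation, which is purely formal. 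I would therefore structure the proof as: (i) recall the graph $G$ from Theorem~\ref{thrm_0_or_2} and check $\Delta(G)\le 6$ case by gadget; (ii) observe that the NTPNE condition on $G$ depends only on $T[0],\dots,T[6]$; (iii) conclude that NTPNE($T$) and NTPNE(0-Or-2) coincide as decision problems on the image of the reduction, so the same reduction establishes NP-hardness of NTPNE($T$).
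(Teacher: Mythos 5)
Your proposal is correct and matches the paper's own argument: the paper likewise observes (in the paragraph and footnote preceding the corollary) that the graph built in the proof of Theorem~\ref{thrm_0_or_2} has maximum degree $6$ --- with the literal nodes being the extremal case at $4$ internal plus at most $2$ external edges --- so the reduction only ever consults $T[0],\dots,T[6]$ and applies verbatim to any pattern agreeing with the 0-Or-2-Neighbors pattern on those entries. Your write-up just makes the degree audit and the ``only the first seven entries matter'' step more explicit than the paper does.
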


Secondly, according to Theorem 7 in \cite{Gilboa_Nisan_public_goods}, adding 1,0 at the beginning of a hard pattern that begins with 1 yields yet another hard pattern. Using this theorem recursively on the patterns of Corollary \ref{cor_after_6_doesnt_matter}, we have that the equilibrium decision problem is hard for any pattern of the form:
\[T=[\underbrace{1,0,1,0,1,0,...,1,0}_{finite\;number\;of\;'1,0'},0,0,0,?,?,...].\]

\begin{corollary}
\label{cor_alternating}
Fix $m\geq 1$, and let $T$ be a BRP such that:
\begin{itemize}
    \item $\forall 0\leq k\leq m$ 
    \begin{enumerate}
      \item $T[2k]=1$
      \item $T[2k+1]=0$
    \end{enumerate}
    \item $T[2m+2]=T[2m+3]=T[2m+4]=0$    
\end{itemize}
Then NTPNE($T$) is NP-complete.
\end{corollary}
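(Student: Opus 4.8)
The plan is to prove this by induction on $m$, using Corollary~\ref{cor_after_6_doesnt_matter} as the base case and Theorem~7 of \cite{Gilboa_Nisan_public_goods} (prepending $1,0$ to a hard pattern that begins with $1$ again yields a hard pattern) as the inductive step. Membership in NP is handled exactly as elsewhere in the paper, so only hardness requires an argument.

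For the base case $m=1$, the hypotheses of the corollary give $T[0]=T[2]=1$ and $T[1]=T[3]=T[4]=T[5]=T[6]=0$, which is precisely the family covered by Corollary~\ref{cor_after_6_doesnt_matter}; hence NTPNE($T$) is NP-hard.

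For the inductive step, I would fix $m\geq 2$ and a BRP $T$ satisfying the stated conditions, and define the shifted pattern $T''$ by $T''[k]=T[k+2]$. A routine re-indexing check shows $T''$ satisfies the conditions of the corollary with parameter $m-1$: for $0\leq k\leq m-1$ we get $T''[2k]=T[2(k+1)]=1$ and $T''[2k+1]=T[2(k+1)+1]=0$, while $T''[2(m-1)+2]=T[2m+2]=0$, $T''[2(m-1)+3]=T[2m+3]=0$, and $T''[2(m-1)+4]=T[2m+4]=0$. By the induction hypothesis NTPNE($T''$) is NP-hard, and moreover $T''$ begins with $1$ since $T''[0]=T[2]=1$. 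Since $T=[T[0],T[1],T'']=[1,0,T'']$, applying Theorem~7 of \cite{Gilboa_Nisan_public_goods} to $T''$ yields that NTPNE($T$) is NP-hard, completing the induction.

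The substantive content has already been established: the maximum-degree-$6$ bound on the instances constructed in the proof of Theorem~\ref{thrm_0_or_2} (which powers Corollary~\ref{cor_after_6_doesnt_matter}), and the prepending reduction of \cite{Gilboa_Nisan_public_goods}. So I do not expect a real obstacle here; the only point demanding care is the index bookkeeping in the shift, together with verifying that the ``begins with $1$'' precondition of Theorem~7 is preserved at each of the $m-1$ iterations, so that the reduction can legitimately be chained.
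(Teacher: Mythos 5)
Your proof is correct and is essentially the paper's own argument: the paper likewise obtains Corollary \ref{cor_alternating} by taking Corollary \ref{cor_after_6_doesnt_matter} as the base family and repeatedly applying Theorem 7 of \cite{Gilboa_Nisan_public_goods} to prepend $1,0$; your induction on $m$ just makes the bookkeeping explicit.
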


We will see later on that this result will also be of use during the proof of Theorem \ref{thrm_all_spiked}.

There is one very similar class of patterns on which the proofs throughout the paper rely. This is the class of all finite patterns that start with a finite number of 1,0, followed by 1,1, i.e. all patterns of the form:
\[T=[\underbrace{1,0,1,0,1,0,...,1,0}_{finite\;number\;of\;'1,0'},1,1,?,?,...,0,0,...].\]
The complexity of those patterns was already discussed and solved in Section 5.4 of \cite{Gilboa_Nisan_public_goods}, but was not formalized and so we state it here in the following lemma. 
\begin{lemma}
\label{lem_alternating_with_odd}
Fix $m\geq 2$, and let $T$ be a BRP s.t.:
    \begin{itemize}
        \item $T$ is finite
        \item $\forall k\in {\rm I\!N} \;\;s.t.\;\; 2k\leq m \;\; T[2k]=1$
        \item $T[1]=0$
        \item $\exists 1\leq n \;\;where\;\; 2n+1\leq m+1, \;\;s.t.\;\; T[2n+1]=1$
    \end{itemize}
Then NTPNE($T$) is NP-complete under Turing reduction.
\end{lemma}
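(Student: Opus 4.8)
The plan is to obtain this lemma by composing two results already established in \cite{Gilboa_Nisan_public_goods}, rather than by building a fresh reduction. The first is the NP-hardness of every finite BRP that begins with $1,1$. The second is Theorem 7 of \cite{Gilboa_Nisan_public_goods}, which shows that prepending the block $1,0$ to any hard BRP whose first entry is $1$ yields another hard BRP (and it is this step that accounts for the ``Turing reduction'' qualifier). The strategy is to realize $T$ as the result of prepending finitely many copies of $1,0$ to a finite BRP starting with $1,1$, and to verify that every pattern met along the way begins with $1$, so that Theorem 7 applies at each link of the chain.

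Concretely, I would first let $2n+1$ be the smallest odd index with $T[2n+1]=1$; by hypothesis such an index exists with $n\geq 1$ and $2n+1\leq m+1$, so in particular $2n\leq m$. For $0\leq j\leq n$ define the shifted pattern $T^{(j)}$ by $T^{(j)}[k]:=T[k+2j]$, so that $T^{(0)}=T$, and each $T^{(j)}$ is finite because $T$ is. The base of the argument is the pattern $T^{(n)}$: since $2n\leq m$ we have $T^{(n)}[0]=T[2n]=1$, and by choice of $n$ we have $T^{(n)}[1]=T[2n+1]=1$, so $T^{(n)}$ is a finite BRP beginning with $1,1$; hence NTPNE($T^{(n)}$) is NP-hard by \cite{Gilboa_Nisan_public_goods}.

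Next I would run a downward induction on $j$ from $n$ to $0$. For $j<n$, I claim that $T^{(j)}$ is exactly $T^{(j+1)}$ with the block $1,0$ prepended: indeed $T^{(j)}[0]=T[2j]=1$ (because $2j\leq 2n\leq m$), $T^{(j)}[1]=T[2j+1]=0$ (because $2j+1$ is an odd index smaller than $2n+1$, and $2n+1$ is the minimal odd index carrying a $1$), and $T^{(j)}[k]=T[k+2j]=T^{(j+1)}[k-2]$ for $k\geq 2$. Since $T^{(j+1)}[0]=1$ and, by the inductive hypothesis, NTPNE($T^{(j+1)}$) is NP-hard, Theorem 7 of \cite{Gilboa_Nisan_public_goods} gives that NTPNE($T^{(j)}$) is NP-hard. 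Taking $j=0$ yields NP-hardness of NTPNE($T$), and since NTPNE($T$) is in NP for any finite pattern (as noted in Section \ref{sec_model}), NTPNE($T$) is NP-complete; the reduction is Turing because the prepending step of \cite{Gilboa_Nisan_public_goods} is.

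The point I would be most careful about is the inductive claim that each $T^{(j)}$ genuinely has the shape ``$1,0$ followed by a pattern whose first entry is $1$'', since that is precisely the hypothesis Theorem 7 requires; this is exactly why it matters that all even entries up to index $2n$ equal $1$ (ensured by $2n\leq m$) and that no odd entry below index $2n+1$ equals $1$ (ensured by the minimality of $2n+1$, which in particular recovers the hypothesis $T[1]=0$). Apart from this indexing bookkeeping there is no genuine obstacle, since both the hardness of finite patterns beginning with $1,1$ and the $1,0$-prepending theorem are imported wholesale from \cite{Gilboa_Nisan_public_goods}.
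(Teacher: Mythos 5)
There is a genuine gap in your base case. Your argument rests on the premise that every finite BRP beginning with $1,1$ is NP-hard, but this is false: finite monotone decreasing patterns such as $[1,1,0,0,\dots]$ are tractable (indeed \cite{Gilboa_Nisan_public_goods} gives a polynomial-time algorithm for exactly that pattern, and all monotone decreasing patterns always admit a PNE). The relevant result in \cite{Gilboa_Nisan_public_goods} (their Theorem 5) requires the pattern beginning with $1,1$ to be \emph{non-monotone}, a condition you never verify for $T^{(n)}$ --- and it can genuinely fail under the hypotheses of Lemma \ref{lem_alternating_with_odd}. For example, $T=[1,0,1,1,0,0,\dots]$ satisfies all four bullet points (with $m=2$, $n=1$), yet your $T^{(1)}=[1,1,0,0,\dots]$ is monotone and easy, so your chain of reductions has nothing hard to start from.

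The fix is to anchor the induction one shift earlier, which is what the paper's (one-line) proof does: it invokes Theorem 6 of \cite{Gilboa_Nisan_public_goods} (hardness of any pattern with a prefix of the form $1,0,0,\dots,0,1,1$) rather than the ``begins with $1,1$'' theorem. With $2n+1$ the minimal odd index carrying a $1$, the pattern $T^{(n-1)}$ begins $1,0,1,1$, which lies in that class and is automatically non-monotone, so no extra hypothesis needs checking; one then applies Theorem 7 only $n-1$ times. Your shift-and-prepend skeleton and the indexing bookkeeping are otherwise correct and match the paper's intended argument; only the choice of base theorem needs to change. Alternatively, you could keep your structure and split the base case: if $T^{(n)}$ is non-monotone invoke Theorem 5 of \cite{Gilboa_Nisan_public_goods}, and otherwise fall back to Theorem 6 applied to $T^{(n-1)}$.
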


\begin{proof}
The proof follows directly from Theorems 6 and 7 from \cite{Gilboa_Nisan_public_goods}.   
\end{proof}

\section{Hardness of Semi-Sharp Patterns}
\label{sec_semi_shrp_general}
In this section we show hardness of semi-sharp Best-Response Patterns, beginning with a specific sub-class of those patterns in Section \ref{sec_semi_shrp_isolated}, and expanding to all other semi-sharp patterns in Section \ref{sec_all_semi_shrp}. We remind the reader that semi-sharp patterns are patterns that begin with 1,0,0.

\subsection{Semi-Sharp Patterns with Isolated Odd 1}
\label{sec_semi_shrp_isolated}
In this section we show that any finite, semi-sharp pattern such that there exists some 'isolated' 1 (namely it has a zero right before and after it) at an odd index, presents a hard equilibrium decision problem. Those patterns can be summarized by the following form:
\[T=[1,0,0,?,?,...,0,\underbrace{1}_{odd\; index},0,?,?,...,0,0,0,...]\]

\begin{theorem}
\label{thrm_semi_sharp_isolated_odd}
Let $T$ be a BRP which satisfies the following conditions:
\begin{itemize}
\item $T$ is finite
\item $T$ is semi-sharp
\item $\exists m\geq 1 \;\;$ s.t.:
    \begin{enumerate}
        \item $T[2m]=T[2m+2]=0$
        \item  $T[2m+1]=1$
    \end{enumerate}
\end{itemize}
Then NTPNE($T$) is NP-complete under Turing reduction.
\end{theorem}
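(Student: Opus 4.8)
\medskip
\noindent\textbf{Overall approach.} The plan is to reduce from \textsc{One-in-Three 3Sat}, following the template of the proof of Theorem~\ref{thrm_0_or_2}: I would build a \emph{Clause Gadget} with three \emph{literal nodes} that in any NTPNE contains exactly one productive node, together with \emph{Copy} and \emph{Negation} gadgets tying together literal nodes of the same variable, resp.\ of a variable and its negation. The pattern entries to be exploited are the isolated odd $1$ --- $T[2m+1]=1$ flanked by $T[2m]=T[2m+2]=0$ --- and the semi-sharp prefix $T[0]=1$, $T[1]=T[2]=0$; finiteness of $T$ enters through the fact that a node with more than $M:=\max\{k:T[k]=1\}$ productive neighbours is forced to play~$0$.

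\medskip
\noindent\textbf{Clause Gadget.} I would place $l_1,l_2,l_3$ in a triangle and give each $l_i$ a set of $2m+1$ pendant leaves. A leaf of $l_i$ sees only $l_i$, hence plays $T[0]=1$ if $l_i=0$ and $T[1]=0$ if $l_i=1$; so the leaves of $l_i$ contribute $2m+1$ productive neighbours exactly when $l_i=0$, and $0$ when $l_i=1$. If $l_i=1$ its productive-neighbour count is $|\{j\ne i: l_j=1\}|\in\{0,1,2\}$, and since $T[0]=1$ but $T[1]=T[2]=0$, best response forces this to be $0$, i.e.\ $l_j=l_k=0$ --- at most one literal node is productive, so all-ones is impossible. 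If $l_i=0$ the count is $|\{j\ne i:l_j=1\}|+(2m+1)\in\{2m+1,2m+2,2m+3\}$; best response forbids the value $2m+1$ because $T[2m+1]=1$, so $|\{j\ne i:l_j=1\}|\ge 1$ --- at least one other literal is productive, ruling out all-zeros; given ``at most one on'' the value $2m+3$ never arises, while $2m+2$ is fine as $T[2m+2]=0$. Hence in any NTPNE exactly one $l_i$ is productive, encoding the satisfied literal. Only $T[0],T[1],T[2],T[2m],T[2m+1],T[2m+2]$ were used, so this gadget is robust to all other (arbitrary) entries of $T$.

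\medskip
\noindent\textbf{Copy and Negation Gadgets.} To force literal nodes $u,v$ to coincide I would add a fresh vertex $N$ with $2m$ pendant leaves joined to both: if $N=1$ then (leaves off) its count is $|\{u,v\}\cap\{1\}|$, which must hit a value with $T=1$, hence $0$, so $u=v=0$; if $N=0$ its count is $|\{u,v\}\cap\{1\}|+2m\in\{2m,2m+1,2m+2\}$, which must avoid $2m+1$ (using $T[2m]=T[2m+2]=0$), so $|\{u,v\}\cap\{1\}|\in\{0,2\}$, i.e.\ $u=v$. A Negation Gadget would come from composing such copies or from a chain-of-triangles ``no-PNE'' structure in the spirit of Figure~\ref{fig_no_pne} with $u,v$ attached to its middle node. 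The extra property to be established, as in Lemmas~\ref{lem_neg} and~\ref{lem_cpg}, is that in any NTPNE these auxiliary gadgets can --- and must --- be set so as to add \emph{no} productive neighbour to $u$ or $v$: a single unexpected productive neighbour would push a literal node's count into a range of $T$ we cannot control.

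\medskip
\noindent\textbf{The main obstacle, and wrap-up.} I expect the hard part to be exactly this non-interference --- i.e.\ ruling out ``spurious'' equilibria in which a gadget vertex is productive --- proved robustly over \emph{every} admissible $T$, whose entries $T[3],\dots,T[2m-1]$ and $T[2m+3],T[2m+4],\dots$ are unconstrained; a naive pendant-only construction leaves open, e.g., an all-false assignment propped up by productive copy-gadget vertices. The natural remedy is to enlarge the Copy/Negation gadgets and use finiteness to force selected gadget vertices to $0$ by giving them more than $M$ productive neighbours, and this is presumably also what makes the reduction Turing rather than many-one, matching Lemma~\ref{lem_alternating_with_odd} and Corollary~\ref{cor_alternating}, which the construction might invoke directly. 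With the gadgets in hand the proof finishes in the usual two directions: a satisfying $1$-in-$3$ assignment yields an NTPNE by setting each $l_i$ according to its variable and every auxiliary gadget to its ``quiet'' state (leaves as above, each copy/negation internal node set to $0$), which is non-trivial because every clause contributes a productive literal node; conversely, any NTPNE induces a satisfying assignment through the Clause/Copy/Negation lemmas.
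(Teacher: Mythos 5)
There is a genuine gap. Your Clause Gadget analysis is sound for an \emph{isolated} gadget (the three entries $T[0],T[1],T[2]$ and the isolated odd $1$ at $2m+1$ do force exactly one productive literal when each literal's only neighbours are its two triangle-mates and its $2m+1$ leaves), but the reduction stands or falls on the variable-consistency gadgets, and these are not actually constructed. Your one concrete proposal --- a vertex $N$ with $2m$ leaves joined to $u$ and $v$ --- admits the local state $N=1$, $u=v=0$, which is a best response for $N$ (count $0$, $T[0]=1$) and simultaneously injects a productive neighbour into both $u$ and $v$; this is precisely the ``spurious equilibrium propped up by productive gadget vertices'' you flag yourself, it pushes a literal node's count into the unconstrained range $T[2m+3],T[2m+4],\dots$, and no mechanism is given to exclude it. You defer this to ``enlarging the gadgets and using finiteness,'' but that is the entire difficulty: the non-interference property has to be proved uniformly over all patterns meeting the hypotheses, with $T[3],\dots,T[2m-1]$ and everything past $2m+2$ arbitrary, and nothing in the proposal shows it can be done. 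The Negation Gadget is only gestured at. As written, neither direction of the reduction can be completed.

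For comparison, the paper sidesteps SAT gadgetry entirely: it observes that the left-shift $T'[k]=T[k+1]$ is finite, non-monotone, and begins with $0$, hence NTPNE($T'$) is already NP-complete by Theorem~4 of \cite{Gilboa_Nisan_public_goods}, and gives a Turing reduction from NTPNE($T'$) to NTPNE($T$). The isolated odd $1$ is exploited only inside two small gadgets: a Force-1-Gadget (a triangle with $2m+1$, $2m$, $2m$ antenna leaves) that pins a chosen node to $1$, and an Add-1-Gadget that gives every original node exactly one extra productive neighbour, realizing the index shift; the reduction is Turing because one builds $n$ graphs $G_1,\dots,G_n$, forcing a different candidate node to $1$ in each, to recover non-triviality. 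If you want to salvage your route, you would still need to either design copy/negation gadgets with a provable non-interference lemma (as in Lemmas~\ref{lem_neg} and~\ref{lem_cpg} for the $0$-Or-$2$ pattern, whose proofs are long case analyses even for that single fixed pattern), or switch to the shift-based reduction.
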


Before we proceed to the proof, we introduce two gadgets and prove two lemmas regarding their functionality.

\textit{Force-1-Gadget}: The first gadget is denoted the Force-1-Gadget, and it will appear in several parts of the graph we construct for the reduction. The goal of this gadget is to enable us to force any node to be assigned 1 in any Nash equilibrium in a PGG defined by $T$.
This gadget is composed primarily of a triangle $x,y,z$, where the triangle's nodes have also several 'Antenna' nodes, which are connected only to their respective node from the triangle. Specifically, $x$ will have $2m+1$ Antenna nodes, and $y$ and $z$ will each have $2m$ Antenna nodes. Say we have some node $u$, whose assignment we wish to force to be 1, then we simply connect $u$ to one of the Antenna nodes of $x$, denoted $a$. The gadget is demonstrated in Figure \ref{fig_FG}.

\textit{Add-1-Gadget}: our second gadget of this proof is denoted the Add-1-Gadget, and its goal is to enable us to assure the existence of (at least) a single productive neighbor to any node in a Nash equilibrium of a PGG defined by $T$.
Say we have a node $v$, to which we wish to add a single productive neighbor, in any equilibrium.
We construct the Add-1-Gadget as follows. We create $m+1$ nodes denoted $x_1,...,x_{m+1}$, $m+1$ nodes denoted $y_1,...y_{m+1}$, and an additional 'bridge' node, denoted $b$. We connect $x_1$ and $y_1$ to all of the other $x_i$ and $y_i$ nodes. For all $i,j\geq 2$ s.t. $i\ne j$, we create the edges $\{x_i,x_j\},\{y_i,y_j\},\{x_i,y_j\}$ (the $x_i,y_i$ nodes almost form a clique, except that for each $i\geq 2$ we omit the edge $\{x_i,y_i\}$). Additionally, for all $i\geq 2$ the bridge node $b$ is connected to $x_i$ and to $y_i$.
To $b$ we attach a Force-1-Gadget, and we also connect $b$ to $v$. The gadget is demonstrated in Figure \ref{fig_AG}.

\begin{figure}[h!]
\centering
\begin{minipage}[t]{.5\textwidth}
    \includegraphics[width=\textwidth]{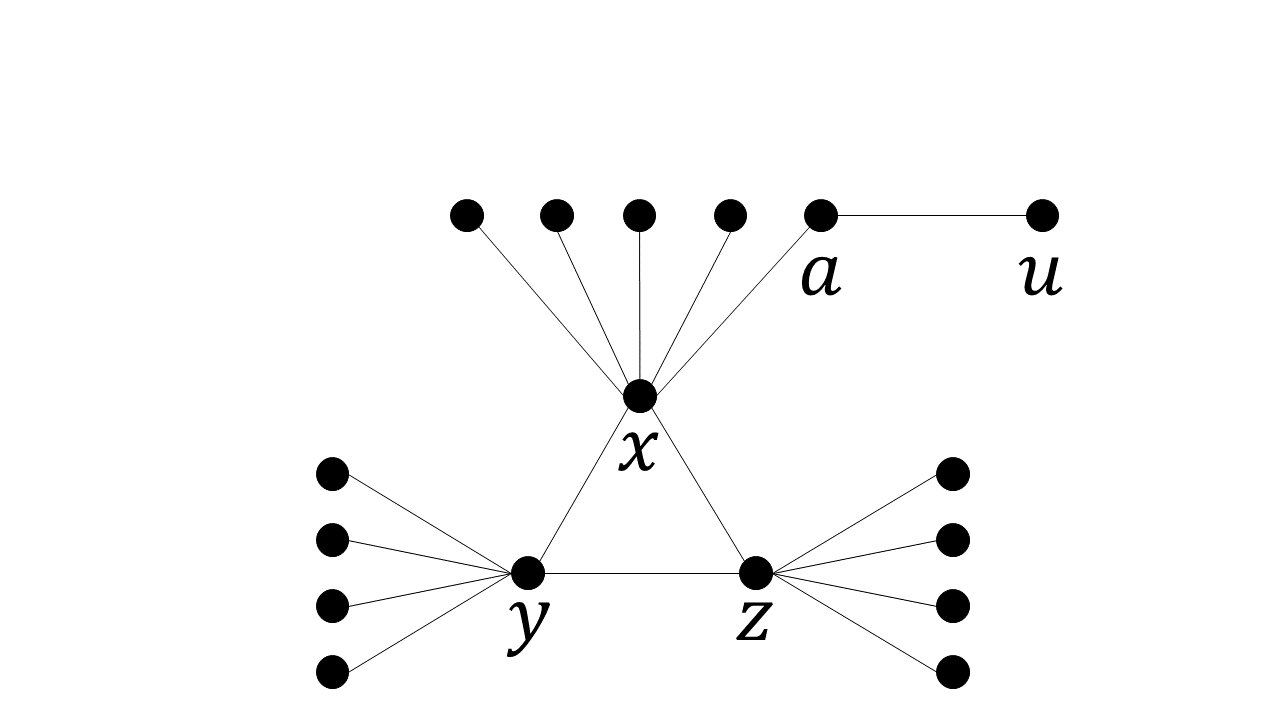}
    \caption{\textit{Force-1-Gadget with $m=2$, attached to $u$.}}
    \label{fig_FG}
\end{minipage}%
\begin{minipage}[t]{.5\textwidth}
    \includegraphics[width=\textwidth]{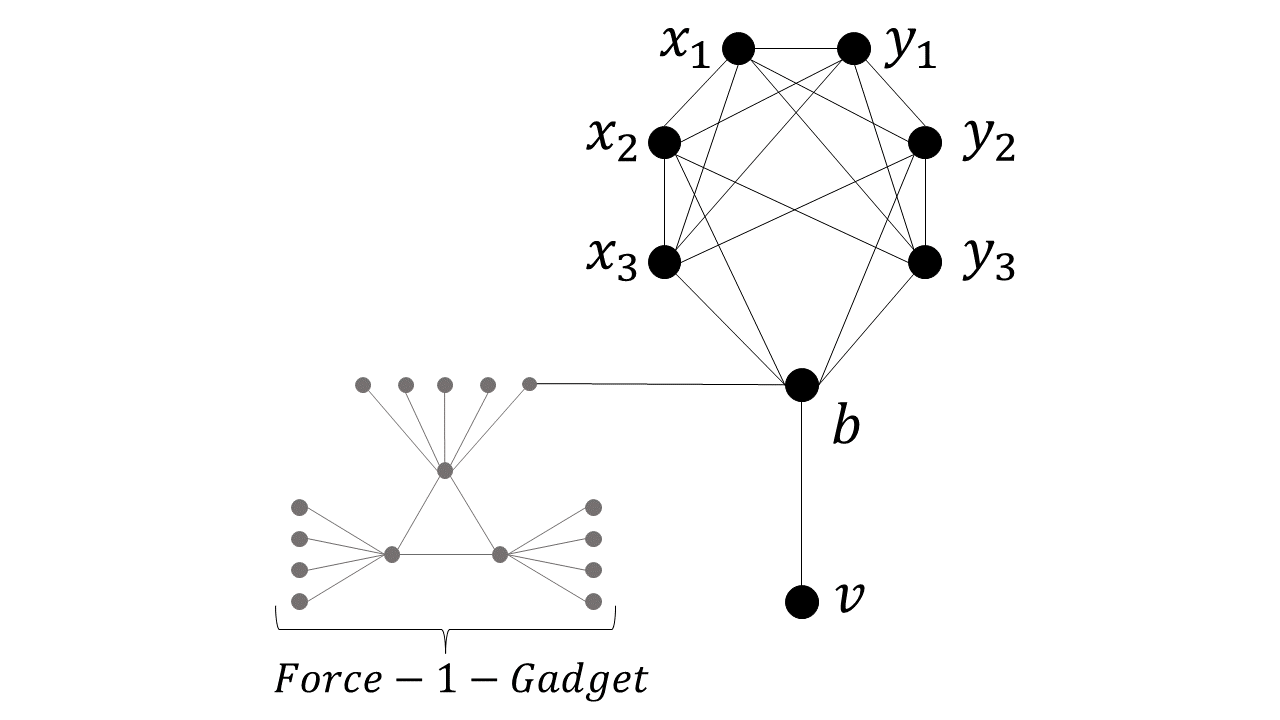}
    \caption{\textit{Add-1-Gadget with $m=2$, attached to $v$.}}
    \label{fig_AG}
\end{minipage}
\end{figure}

The following lemmas formalize the functionality of the two gadgets, beginning with the Force-1-Gadget in Lemma \ref{lem_fg}.

\begin{lemma}
\label{lem_fg}
In any PNE in a graph $G$ corresponding to the BRP $T$ (from Theorem \ref{thrm_semi_sharp_isolated_odd}), where $G$ has a node $u$ that is connected to a Force-1-Gadget $fg$ as described, $u$ must be assigned 1, and its neighbor from $fg$, $a$, must be assigned 0.\footnote{The property that $a=0$ allows us to use the Force-1-Gadget without risking potentially adding productive neighbors to the respective node.} Furthermore, if $u=1$ there exists an assignment to the nodes of $fg$ such that they each play their best response. 
\end{lemma}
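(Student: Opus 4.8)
The plan is to analyze an arbitrary PNE of the PGG on a graph containing the gadget, exploiting the restricted behavior forced by the prefix $1,0,0$ together with the three ``probe'' entries $T[2m]=0$, $T[2m+1]=1$, $T[2m+2]=0$. The crucial elementary observation is that every degree-one Antenna node $w$ attached to a triangle node $p$ acts as a \emph{mirror}: since $T[0]=1$ and $T[1]=0$, in any PNE we have $w=1$ iff $p=0$. Hence, if a triangle node is assigned $0$ then all $2m$ (or, for $x$, all $2m$ of the degree-one ones) of its Antenna neighbours are productive, pushing its productive-neighbour count up into the range where the probes live; and if it is assigned $1$ then all its Antenna neighbours are $0$. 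The only Antenna node of degree two is $a$, for which the PNE condition reads $a=T[s_x+s_u]$.

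First I would pin down $x$. Since $T[1]=T[2]=0$, if $x=1$ then $a=T[1+s_u]=0$ regardless of $s_u$; hence the productive neighbours of $x$ are just $s_y+s_z\in\{0,1,2\}$, and $T[s_y+s_z]=1$ forces $y=z=0$. But then the node $y$, assigned $0$, has $x=1$, $z=0$ and $2m$ productive Antenna neighbours, i.e.\ $2m+1$ productive neighbours in total, contradicting $T[2m+1]=1$ (which would demand $y=1$). Hence $x=0$. With $x=0$ I would then show $y=z=0$ by a two-line case split, using the symmetry of $y$ and $z$: if $y=1$ its productive-neighbour count is $s_z\in\{0,1\}$, and $T[s_z]=1$ forces $z=0$; if $y=0$ its count is $2m+s_z$, and since $T[2m]=0$ but $T[2m+1]=1$, again $z=0$; symmetrically $y=0$. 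Finally, with $x=y=z=0$ the productive-neighbour count of $x$ equals $2m+s_a$, and since $T[2m]=0$, $T[2m+1]=1$ we must have $a=0$; then $a=T[s_x+s_u]=T[s_u]$, so $T[0]=1$, $T[1]=0$ force $s_u=1$. This establishes both $u=1$ and $a=0$.

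For the converse, assuming $u=1$, I would exhibit the assignment $x=y=z=a=0$ with every remaining Antenna node set to $1$, and verify it directly: each triangle node has exactly $2m$ productive neighbours (all its own degree-one Antennas, the two other triangle nodes being $0$, and, for $x$ only, the node $a=0$), and $T[2m]=0$; node $a$ has the single productive neighbour $u$, and $T[1]=0$; and each degree-one Antenna node has its unique neighbour set to $0$, hence plays $T[0]=1$. So all nodes of $fg$ play best response.

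The main obstacle I anticipate is keeping the case analysis robust to the fact that $T$ is otherwise unconstrained --- we know nothing about $T[2m+3]$ or any entry beyond the probes --- so the argument must be arranged so that the productive-neighbour counts that actually arise always land on $2m$ or $2m+1$, never on $2m+2$ or $2m+3$; forcing $y=z=0$ \emph{before} examining $x$'s count is precisely what achieves this. A secondary point requiring care is the degree-two node $a$: handling it correctly both when $x=1$ (where the $1,0,0$ prefix makes $a=0$ automatic) and when $x=0$ (where the probes force $a=0$) is what couples the value of $u$ to the gadget.
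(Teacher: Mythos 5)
Your proof is correct and rests on exactly the same mechanics as the paper's: the degree-one Antennas mirror their triangle node via $T[0]=1,T[1]=0$, and the probe entries $T[2m]=0$, $T[2m+1]=1$ rule out every configuration of the triangle except $x=y=z=0$ with all degree-one Antennas productive; the verification for $u=1$ is identical. The only (immaterial) difference is organizational: you derive $x=y=z=0$, then $a=0$, then $u=1$ directly, whereas the paper assumes $u=0$ and splits on the value of $x$ to reach a contradiction.
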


\begin{proof}
First we show that $u$ must be assigned 1. Assume by way of contradiction that $u=0$. Divide into the following two cases.
If $x=1$, then all of its Antenna nodes must be assigned 0 (according to $T$). Additionally, $y$ and $z$ must also be assigned 0, as otherwise $x$ wouldn't be playing best response, since $T$ is semi-sharp. Therefore, the best response of all of the Antenna nodes of $y$ and $z$ is to play 1, which leaves $y$ and $z$ with $2m+1$ productive neighbors each, and so they are not playing best response, in contradiction.
If $x=0$, then all of its Antenna nodes must play 1. Therefore, $x$ must have at least one other productive neighbor, as otherwise it would have $2m+1$ productive neighbors and wouldn't be playing best response; w.l.o.g. assume $y=1$. Then all of $y$'s Antenna nodes must play 0. Therefore, $z$ must play 0, as otherwise $y$ wouldn't be playing best response. This means the best response for $z$'s Antenna nodes is to play 1, which leaves $z$ with $2m+1$ productive neighbors, and so it isn't playing best response, in contradiction.
We move on to showing that $a$ must play 0. This follows directly from the fact that $u=1$. Since $a$ only has one other neighbor ($x$), regardless of its strategy the best response for $a$, according to $T$, would be playing 0.
It is left to show that when $u=1$ and $a=0$, there exists an assignment to the nodes of $fg$ s.t. they all play best response. One may verify that when we set $x=y=z=0$ and set all the Antenna nodes in $fg$ (except for $a$) to 1, then all nodes of $fg$ play best response (specifically, $x,y,z$ would each have exactly $2m$ productive neighbors, which, by definition of $T$, means they are playing best response).
\end{proof}

We move on to proving the following Lemma, which formalizes the functionality of the Add-1-Gadget.

\begin{lemma}
\label{lem_ag}
In any graph $G$ corresponding to the BRP $T$ (from Theorem \ref{thrm_semi_sharp_isolated_odd}), where $G$ has a node $v$ that is connected to an Add-1-Gadget $ag$ as described, there always exists an assignment to the nodes of $ag$ such that they all play best response, regardless of $v$'s strategy. In addition, the bridge node $b$ of $ag$ must be assigned 1 in such an assignment.
\end{lemma}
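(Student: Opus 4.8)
The plan is to treat the two assertions of the lemma separately. For the claim that the bridge node $b$ must be assigned $1$ in any best-responding assignment of $ag$, I would reuse Lemma~\ref{lem_fg}: a Force-1-Gadget is attached to $b$, all of its nodes lie in $ag$, and apart from the single antenna $a$ joined to $b$ they are adjacent only to one another; hence the contradiction derived in the proof of Lemma~\ref{lem_fg} from the hypothesis ``the node attached to a Force-1-Gadget plays $0$'' applies with $u=b$, so $b=1$. As a by-product $a=0$, since its only neighbours are $b$ and a triangle node of the gadget, so it has $1$ or $2$ productive neighbours and $T[1]=T[2]=0$.

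For the existence claim I would exhibit, for each of the two strategies of $v$, an explicit assignment of $ag$ and verify node by node that it is best-responding; in both cases I set $b=1$ and assign the attached Force-1-Gadget as in the last paragraph of the proof of Lemma~\ref{lem_fg} (so $a=0$). If $s_v=1$, set every $x_i$ and every $y_i$ to $1$: then each of $x_1,y_1$ has $2m+1$ neighbours, all productive; each $x_i,y_i$ with $i\ge2$ has $2m+1$ neighbours ($x_1$, $y_1$, the $2m-2$ other almost-clique vertices adjacent to it, and $b$), all productive; and $b$ has exactly the $2m+1$ productive neighbours $v,x_2,\dots,x_{m+1},y_2,\dots,y_{m+1}$ (with $a=0$). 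Since $T[2m+1]=1$, every node of $ag$ best-responds. If $s_v=0$, set $x_1=1$ and all other $x_i,y_i$ (in particular $y_1$) to $0$: then $x_1$ has $0$ productive neighbours and $T[0]=1$, so $x_1=1$ is best response; $y_1$ has the single productive neighbour $x_1$ and $T[1]=0$, so $y_1=0$ is best response; each $x_i,y_i$ with $i\ge2$ has exactly the two productive neighbours $x_1$ and $b$ and $T[2]=0$, so its value $0$ is best response; and $b$ has no productive neighbour ($v$, $a$ and all $x_i,y_i$ with $i\ge2$ are $0$) and $T[0]=1$, so $b=1$ is best response.

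The step I expect to need the most care is the $s_v=0$ case. The obvious attempt --- setting all $x_i,y_i$ to $0$ --- fails, since then $x_1$ and $y_1$ would have no productive neighbours and $T[0]=1$ would force them to $1$. Breaking the $x_1$--$y_1$ symmetry fixes this, and it is exactly the omitted edges $\{x_i,y_i\}$ ($i\ge2$) that keep the productive-neighbour count of every such node at $2$ rather than $3$, so that $T[2]=0$ applies. Beyond that the argument is routine bookkeeping of neighbourhoods in the ``clique minus a matching'' plus an appeal to Lemma~\ref{lem_fg}, and I would double-check the edge conventions --- in particular that $\{x_1,y_1\}$ is present while each $\{x_i,y_i\}$ with $i\ge2$, $\{b,x_1\}$ and $\{b,y_1\}$ are absent --- since all the counts rely on them.
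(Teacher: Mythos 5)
Your proposal is correct and follows essentially the same route as the paper: appeal to Lemma~\ref{lem_fg} for the claim $b=1$ (and $a=0$), then exhibit the same two explicit assignments (all $x_i,y_i$ set to $1$ when $v=1$; only $x_1$ set to $1$ when $v=0$). The only difference is that you carry out the neighbour-counting verification explicitly, which the paper leaves to the reader, and your counts are accurate.
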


\begin{proof}
The claim that $b$ must play 1 follows directly from the fact that it has a Force-1-Gadget attached to it, i.e. from Lemma \ref{lem_fg}. Additionally, all the nodes of the Force-1-Gadget attached to $b$ can be assigned as suggested in Lemma \ref{lem_fg}. It is left to show a possible assignment to the rest of the nodes of $ag$. We divide into cases.
If $v=0$, then we set $x_1=1$ and all other $x_i,y_i$ nodes we set to 0.
If $v=1$, then we set $x_i=y_i=1$ for all $1\leq i \leq m+1$. One may verify that given these assignments all nodes of $ag$ play their best response.
\end{proof}

Given these two gadgets, we are almost ready to prove Theorem \ref{thrm_semi_sharp_isolated_odd}. We now introduce a useful definition, and then proceed to the proof of the theorem.
\begin{definition}
Let $T$ and $T'$ be two BRPs. We say that $T'$ is \textit{shifted left by $t$} from $T$ if
\[\forall k\geq 0 \;\; T'[k]=T[k+t].\]
\end{definition}

\begin{proof}
(Theorem \ref{thrm_semi_sharp_isolated_odd}) Denote by $T'$ the pattern which is shifted left by 1 from $T$, i.e.:
\[\forall k\geq 0 \;\; T'[k]=T[k+1].\]
Notice that $T'$ is non-monotonic, finite, and begins with $0$, and therefore NTPNE($T'$) is NP-complete according to Theorem 4 in \cite{Gilboa_Nisan_public_goods}, which allows us to construct a Turing reduction from it.
The technique of the reduction is very similar to those of the proofs of Theorems 5 and 6 in \cite{Gilboa_Nisan_public_goods}.
Given any graph $G=(V,E)$, where $V={v_1,...,v_n}$, we construct $n$ graphs $G_1,...,G_n$, where for each $1\leq i\leq n$ the graph $G_i$ is defined as follows. The graph $G_i$ contains the original input graph $G$, and in addition, we connect a unique Add-1-Gadget to each of the original nodes, and a Force-1-Gadget only to node $v_i$. If there exists some non-trivial PNE in the PGG defined on $G$ by $T'$, let $v_j$ be some node who plays 1. Then the same NTPNE is also an NTPNE in the PGG defined by $T$ on $G_j$, when we assign the nodes of the additional gadget as suggested in Lemmas \ref{lem_fg} and \ref{lem_ag}. To see why, notice that $T'$ is shifted left by 1 from $T$, and the Add-1-Gadgets ensure that all nodes have exactly one additional productive neighbor than they had in $G$.

In the other direction, if there exists an NTPNE in a PGG defined by $T$ on one of the graphs $G_j$, then by the same logic this is also a PNE in the game defined by $T'$ on $G$ (ignoring the assignments of the added nodes). Moreover, the Force-1-Gadget ensures this assignment is non-trivial even after removing the added nodes, since $v_j$ must play 1 in this assignment.
\end{proof}

\subsection{All Semi-Sharp Patterns}
\label{sec_all_semi_shrp}
In this section we show that any finite, non-monotone, semi-sharp pattern presents a hard equilibrium problem.

\begin{theorem}
\label{thrm_all_semi_shrp}
Let $T_1$ be a finite, non-monotone, semi-sharp BRP. Then NTPNE($T_1$) is NP-complete under Turing reduction.
\end{theorem}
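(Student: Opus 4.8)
The plan is to split into three cases according to the structure of $T_1$, dispatching two of them with results already available and treating the third directly. Since $T_1$ is non-monotone and semi-sharp it has an entry $1$ at some index $\ge 3$; let $p\ge 3$ be the least such index, so that $T_1[p-1]=0$. If $T_1$ has an \emph{isolated} $1$ at an odd index — i.e.\ some $m\ge 1$ with $T_1[2m]=T_1[2m+2]=0$ and $T_1[2m+1]=1$ — then NTPNE($T_1$) is NP-complete by Theorem~\ref{thrm_semi_sharp_isolated_odd}. Otherwise, if $T_1[p+1]=1$, then $T_1$ begins with $1,\underbrace{0,\dots,0}_{p-1\ge 2},1,1$, i.e.\ it has a prefix of the form $1,0,0,\dots,0,1,1$, and NTPNE($T_1$) is NP-complete by the corresponding result of \cite{Gilboa_Nisan_public_goods}. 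The one remaining case is that $T_1$ has no isolated odd $1$ and $T_1[p+1]=0$; then $p$ is an isolated $1$ at an even index, which I fix and call $s$ (so $s\ge 4$, $T_1[s-1]=T_1[s+1]=0$, $T_1[s]=1$, and $T_1$ is still semi-sharp). This is the case to treat directly.

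For this case I would repeat the reduction of Theorem~\ref{thrm_semi_sharp_isolated_odd}, reducing from NTPNE($T'$) where $T'$ is shifted left by $1$ from $T_1$. Since $T_1$ is semi-sharp we have $T'[0]=T_1[1]=0$, and since $T_1$ is finite and non-monotone, so is $T'$; hence NTPNE($T'$) is NP-complete by Theorem~4 of \cite{Gilboa_Nisan_public_goods}. Given an input graph $G=(V,E)$, $V=\{v_1,\dots,v_n\}$, for the $T'$-game, I would form graphs $G_1,\dots,G_n$, where $G_i$ is $G$ together with a fresh ``add-one'' gadget attached to each original vertex — guaranteeing that vertex exactly one extra productive neighbour in any PNE — and a ``force-one'' gadget attached to $v_i$, forcing $v_i=1$ in any PNE while contributing no productive neighbour to $v_i$. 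The add-one gadgets implement the left shift ($T'[k]=T_1[k+1]$), and the force-one gadget guarantees that every NTPNE of $G_i$ restricts to a \emph{non-trivial} PNE of the $T'$-game on $G$; conversely, from an NTPNE of the $T'$-game on $G$ one picks a vertex $v_j$ playing $1$ and extends it to an NTPNE of $G_j$ using the internal consistency of the gadgets. Running this over all $i$ gives the Turing reduction.

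The real work, and the main obstacle, is to build these two gadgets for an arbitrary finite non-monotone semi-sharp $T_1$, since the Force-1-Gadget and Add-1-Gadget of Section~\ref{sec_semi_shrp_isolated} were tailored to an isolated $1$ at an \emph{odd} index. For the force-one gadget I would keep the triangle-with-antennas design of Lemma~\ref{lem_fg} but give one triangle vertex $s$ antennas and each of the other two $s-1$ antennas; rerunning the case analysis of Lemma~\ref{lem_fg} — now using $T_1[s-1]=0$, $T_1[s]=1$, $T_1[s+1]=0$ together with $T_1[0]=1$, $T_1[1]=T_1[2]=0$ — shows that the attached vertex must play $1$, its attachment vertex must play $0$, and the gadget is internally consistent when the attached vertex plays $1$ (the assignment setting all three triangle vertices to $0$ works precisely because $T_1[s-1]=0$). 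For the add-one gadget I would adapt the near-clique-plus-bridge design of Lemma~\ref{lem_ag}: the bridge $b$ is pinned to $1$ by a force-one gadget, and the rest of the gadget must let $b$ see exactly $s$ productive neighbours when the attached vertex plays $1$ (so that $T_1[s]=1$ keeps $b$ consistent) and $0$ productive neighbours from the gadget when it plays $0$ (using $T_1[0]=1$). The only new difficulty relative to Section~\ref{sec_semi_shrp_isolated} is a parity issue — $b$ now needs an odd number, $s-1$, of clique neighbours, whereas the analogous count there was even — so the clique part must be built slightly asymmetrically (e.g.\ joining $b$ to one additional vertex); checking that the modified gadget still has a consistent assignment for either strategy of the attached vertex, and never adds a spurious productive neighbour to it, is the delicate but routine remainder of the proof.
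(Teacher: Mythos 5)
Your case split is exhaustive and your first two cases are fine (an isolated odd $1$ is handled by Theorem~\ref{thrm_semi_sharp_isolated_odd}, and $T_1[p+1]=1$ gives a prefix $1,0,\dots,0,1,1$ covered by \cite{Gilboa_Nisan_public_goods}), but your third case — the first $1$ after index $0$ sits at an even index $s$ and is isolated — contains a genuine gap, and in fact the gadget you need there does not exist. The Add-1-Gadget must attach to $v$ through a single bridge node $b$ that is forced to $1$ and is $v$'s only productive neighbour from the gadget, and it must be internally consistent whether $v=0$ or $v=1$. Take $T_1=[1,0,0,0,1,0,0,\dots]$, so $s=4$ and the only indices with value $1$ are $0$ and $s$. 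When $v=1$, $b$ already has one productive neighbour ($v$) and zero from its Force-1-Gadget, so $b$ needs exactly $s-1=3$ productive neighbours inside the gadget. Each of those three nodes is productive and adjacent to $b$, hence has at least one productive neighbour, hence needs exactly $s=4$ productive neighbours in total, i.e.\ exactly $s-1=3$ productive neighbours other than $b$ (and none of them may touch $v$). Every other productive gadget node is not adjacent to $b$ and needs $0$ or $s$ productive neighbours. Now look at the subgraph induced on the productive gadget nodes excluding $b$: it has exactly $s-1$ vertices of odd degree $s-1$ and all remaining vertices of even degree, so the degree sum is odd — violating the handshake lemma. This parity obstruction is exactly the issue you flag as "delicate but routine"; it is not routine, it is fatal to any gadget of this shape (joining $b$ to one extra clique vertex does not help, since $b$ must still have exactly $s-1$ \emph{productive} gadget-neighbours, each still needing an odd number of productive neighbours besides $b$). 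Your generalized Force-1-Gadget, by contrast, does go through for even $s$, but that alone does not give the reduction.

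The paper avoids this case entirely by a different mechanism: it defines half-patterns and double-patterns ($T$ is the half-pattern of $T'$ if $T[k]=T'[2k]$) and proves in Lemma~\ref{lem_double_pat} that hardness of a pattern lifts to every double-pattern of it, via a graph-doubling reduction in which every node's productive-neighbour count is exactly doubled. It then repeatedly halves $T_1$ until it reaches a pattern already known to be hard — one with an isolated odd $1$ (Theorem~\ref{thrm_semi_sharp_isolated_odd}), one beginning with $1,1$, or one covered by Corollary~\ref{cor_alternating} or Lemma~\ref{lem_alternating_with_odd} — and pushes hardness back up the chain of doublings. In your problematic example, $[1,0,0,0,1,0,0,\dots]$ halves to the $0$-Or-$2$-Neighbors pattern $[1,0,1,0,0,\dots]$, which is hard by Theorem~\ref{thrm_0_or_2}. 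If you want to salvage your direct approach for case 3, you would need either a fundamentally different "add-one" mechanism that evades the parity count, or a detour such as the paper's halving argument; as written, the proof is incomplete.
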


Before proceeding to the proof, we wish to introduce the following definition and prove two lemmas related to it.

\begin{definition}
Let $T$ and $T'$ be two BRPs such that $\forall k\in {\rm I\!N}$ it holds that $T[k]=T'[2k]$. Then we say that $T'$ is a \textit{double-pattern} of $T$, and $T$ is the \textit{half-pattern} of $T'$. Notice that a pattern has a unique half-pattern, whereas, since the definition does not restrict $T'$ in the odd indices, any pattern has infinite double-patterns.
\end{definition}

The first lemma is very simple and intuitive, stating that the largest index with value 1 in a half pattern is strictly smaller than the largest index with value 1 in its original pattern. This is true since for any index $i$ s.t. the value of the half pattern is 1 in that index, the original pattern has a value of 1 in index $2i$.

\begin{lemma}
\label{lem_half_reaches_best_shot}
Let $T$ and $T'$ be two finite BRPs such that $T$ is the half-pattern of $T'$. Denote by $i$ the largest index s.t. $T[i]=1$ and denote by $j$ the largest index s.t. $T'[j]=1$. Then if $j>0$ we have that $i<j$.
\end{lemma}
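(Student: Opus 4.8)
The plan is to unwind the definition of half-pattern and then exploit the maximality of $j$. First I would recall that $T$ being the half-pattern of $T'$ means exactly that $T[k]=T'[2k]$ for every $k\in{\rm I\!N}$. Applying this identity at $k=i$, where $i$ is the largest index with $T[i]=1$, gives $T'[2i]=T[i]=1$. Since $j$ is by definition the largest index at which $T'$ takes the value $1$, and $2i$ is such an index, it follows immediately that $j\geq 2i$.

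It then remains only to upgrade $j\geq 2i$ to the strict inequality $i<j$, and for this I would split into two cases according to whether $i=0$. If $i\geq 1$, then $2i\geq i+1$, so $j\geq 2i\geq i+1>i$, which is what we want. If instead $i=0$, the bound $j\geq 2i$ only says $j\geq 0$, and here one genuinely needs the hypothesis of the lemma: since $j>0$, we again get $i=0<j$. (We implicitly use that $i$ is well-defined, i.e.\ that $T$ has at least one entry equal to $1$; this is automatic in every application since the patterns under consideration satisfy $T[0]=1$, and also because $T$ is finite so a largest such index exists whenever one exists at all.)

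I do not expect any real obstacle: the argument is a one-line consequence of the definition together with a trivial inequality. The only point that requires a moment's care is the degenerate case $i=0$, where the inequality $j\geq 2i$ carries no content on its own and the conclusion must instead be read off directly from the standing assumption $j>0$.
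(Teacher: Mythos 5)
Your proof is correct and follows essentially the same route as the paper's, which consists of the single observation that $T'[2i]=T[i]=1$ forces $j\geq 2i$; you simply make explicit the maximality argument and the degenerate case $i=0$, where the hypothesis $j>0$ is genuinely needed to obtain strictness.
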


\begin{proof}
The proof is trivially given by the definition of a half pattern, since $T'[2i]=T[i]$.
\end{proof}

The next lemma is less trivial, stating the relation between hardness of a pattern and its double-pattern.
\begin{lemma}
\label{lem_double_pat}
Let $T$ be a BRP such that NTPNE($T$) is NP-complete, and let $T'$ be a double-pattern of $T$. Then NTPNE($T'$) is NP-complete.
\end{lemma}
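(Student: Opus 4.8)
The plan is to exhibit a polynomial-time many-one reduction from NTPNE($T$) to NTPNE($T'$) (this is in fact stronger than the Turing reductions used elsewhere in the paper); together with the observation that NTPNE is in NP whenever the pattern is finite --- which is the case for $T'$ in all our applications of this lemma --- this gives NP-completeness. Given an input graph $G=(V,E)$ for NTPNE($T$), build $G'$ on the vertex set $\{v_1,v_2: v\in V\}$ consisting of two disjoint copies of $V$, by inserting, for every edge $\{u,v\}\in E$ and every $a,b\in\{1,2\}$, the edge $\{u_a,v_b\}$. Informally: duplicate every vertex, and replace every edge by the complete bipartite graph between the two pairs of endpoints. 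This is clearly computable in polynomial time ($G'$ has $2|V|$ vertices and $4|E|$ edges). The one structural fact the argument rests on is that $N_{G'}(u_1)=N_{G'}(u_2)=\{v_1,v_2: v\in N(u)\}$ for every $u$ (and, since $G$ is loopless, $u_1$ and $u_2$ are not adjacent).

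\textbf{Completeness of the reduction.} Suppose $s$ is an NTPNE of the $T$-game on $G$ and write $n_u=\sum_{v\in N(u)}s_v$. Define $s'$ on $G'$ by $s'_{v_1}=s'_{v_2}=s_v$. Each copy $u_a$ then has $\sum_{v\in N(u)}(s'_{v_1}+s'_{v_2})=2n_u$ productive neighbors, so its prescribed best response under $T'$ is $T'[2n_u]=T[n_u]=s_u=s'_{u_a}$; hence $s'$ is a PNE, and it is non-trivial since $s$ is.

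\textbf{Soundness of the reduction.} Let $s'$ be an NTPNE of the $T'$-game on $G'$. Because $u_1$ and $u_2$ have exactly the same neighborhood, they see the same number of productive neighbors and therefore have the same prescribed best response; thus $s'_{u_1}=s'_{u_2}$ for every $u$ --- this is where the parity obstruction (that a plain copy of $G$ need not see even neighbor counts under $T'$) vanishes for free. Put $s_v:=s'_{v_1}$ and $n_u=\sum_{v\in N(u)}s_v$; then $u_1$ has $2n_u$ productive neighbors in $G'$, so $s_u=s'_{u_1}=T'[2n_u]=T[n_u]$, i.e. $s$ is a PNE of the $T$-game on $G$, and it is non-trivial because some $s'_{u_a}=1$ forces $s_u=1$.

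Once the construction is in hand the verification is routine; the real design choice --- and the step I expect to be the crux --- is finding a \emph{gadget-free} way to double everyone's neighbor count. The tempting approach is to glue together two copies of $G$ and force the copies of each vertex to agree via an explicit ``copy'' gadget, but no such gadget is available for an arbitrary double-pattern $T'$, about which we know nothing beyond its even-indexed entries. Replacing each edge by a complete bipartite ``book'' sidesteps this entirely: the two copies of a vertex are made indistinguishable by the graph itself, so equality is enforced automatically in every equilibrium. It is also worth checking the behaviour at zero productive neighbors: $T'[0]=T[0]$, so an all-zero profile is an equilibrium of $G'$ if and only if it is one of $G$, which is exactly what makes the non-triviality conditions line up on the two sides.
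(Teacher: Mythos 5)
Your construction is exactly the paper's: two copies of $G$ with each edge replaced by the complete bipartite graph between the two pairs of endpoint-copies, and both directions argued via the observation that the two copies of a vertex have identical neighborhoods and hence identical strategies in any equilibrium. The proof is correct and essentially identical to the paper's.
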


\begin{proof}
We use a specific case of the same reduction that was used to prove Theorem 4 in \cite{Gilboa_Nisan_public_goods}.
Given a graph $G_1=(V_1,E_1)$ as input, where $V_1={v_1^1,...,v_n^1}$, we create another replica of it $G_2=(V_2,E_2)$, where $V_2={v_1^2,...,v_n^2}$. For each node (from both graphs), we add edges connecting it to all replicas of its neighbors from the opposite graph. That is, the following group of edges is added to the graph:
\[E=\{\{v_i^1,v_j^2\}|\{v_i^1,v_j^1\}\in E_1\}.\]
A demonstration of the reduction can be seen in Figure \ref{fig_lem_double_pat}.
\begin{figure}[h]
        \centering
	\includegraphics[width=.5\textwidth]{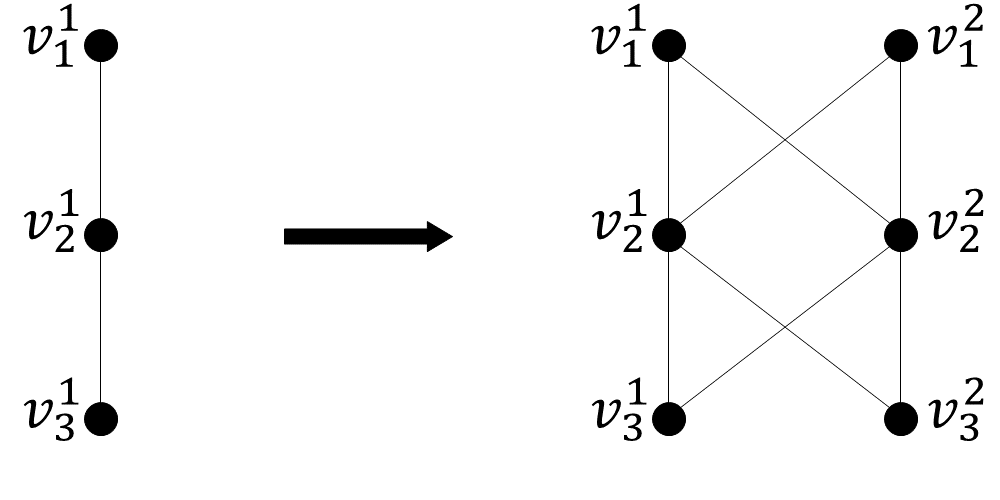}
	\caption{Example of the reduction of Lemma \ref{lem_double_pat}'s proof.}
	\label{fig_lem_double_pat}
\end{figure}

Denote by $P$ the PGG defined on $G_1$ by $T$, and by $P'$ the PGG defined by $T'$ on $G'=(V',E')$ where $V'=V_1\cup V_2,\;\;E'=E\cup E_1\cup E_2$. We show that there exists an NTPNE in $P$ iff there exists one in $P'$. If there exists an NTPNE in $P$, we simply give the nodes of $G_2$ the same assignment as those of $G_1$. Since $T'$ is a double pattern of $T$, any node $v'\in V'$ must play best response, having exactly twice as many supporting neighbors than it had (or its replica had) in $P$. In addition, this assignment is clearly non-trivial as the nodes of $G_1$ have the same (non-trivial) assignment they had in $P$.

In the opposite direction, if there exists an NTPNE in $P'$, notice that for all $1\leq i\leq n$ it must be that $v_i^1$ and $v_i^2$ have identical assignments, since they both share exactly the same neighbors, and thus have identical best responses. Therefore, any node $v'\in V$ must have an even number of productive neighbors, half of which are in $V_1$ and the other half in $V_2$ (as for each productive neighbor from $V_1$ there is a respective productive neighbor from $V_2$). We then simply ignore $G_2$, and leave the assignment of $G_1$ as it is, and each node shall now have exactly half as many productive neighbors as it had in the original assignment. Since $T$ is a half pattern of $T'$, we get a PNE in $P$. Furthermore, the symmetry between matching nodes from $G_1$ and $G_2$ ensures that at least one node from $G_1$ was originally assigned 1, and so the constructed PNE in $P'$ is also non-trivial.
\end{proof}

Given Lemmas \ref{lem_half_reaches_best_shot} and \ref{lem_double_pat}, we are now able to prove Theorem \ref{thrm_all_semi_shrp}. The intuitive idea of the proof is that we halve the pattern $T_1$ (i.e. find its half-pattern) repeatedly, until eventually we reach some pattern for which we already know the equilibrium problem is hard, which, as we will see, must happen at some point. Then, by applying Lemma \ref{lem_double_pat} recursively, we have that $T_1$ is hard.

\begin{proof}(Theorem \ref{thrm_all_semi_shrp})
From Lemma \ref{lem_half_reaches_best_shot} we have that if we halve a pattern beginning with $1$ enough times, we will eventually reach the Best-Shot pattern: $T_{Best\mbox{-}Shot}[0]=1$ and $\forall k\geq 1 \;\; T_{Best\mbox{-}Shot}[k]=0$. Divide into two cases.

In the first case assume that $\forall k\in {\rm I\!N}$ it holds that $T_1[2^k]=0$. In this case, we know that no matter how many times we halve $T_1$ into patterns $T_2,T_3,...$, the value in index 1 of all these half-patterns will always be 0, i.e. $T_i[1]=0$ for all $i$. Assume that we halve $T_1$ repeatedly into patterns $T_2,T_3,...,T_m$ (where $T_{i}$ is the half pattern of $T_{i-1}$) such that $T_m$ is the first time that we reach the Best-Shot pattern. Observe $T_{m-1}$. For any \textit{even} index $k\ne 0$ it must hold that $T_{m-1}[k]=0$, otherwise $T_m$ would not be the Best-Shot pattern. Additionally, there must exist at least one \textit{odd} index $j$ s.t. $T_{m-1}[j]=1$, since $T_m$ is the \textit{first} time we reach the Best-Shot pattern. For these two reasons, we have that $T_{m-1}$ satisfies the conditions of Theorem \ref{thrm_semi_sharp_isolated_odd} and therefore NTPNE($T_{m-1}$) is NP-complete under Turing reduction. From Lemma \ref{lem_double_pat} (used inductively), we have that $\forall 1\leq i\leq m-2$ NTPNE($T_i$) is also NP-complete under Turing reduction, and specifically NTPNE($T_1$).

 In the second case, assume that there exists some $k\in {\rm I\!N}$ s.t. $T_1[2^k]=1$. In that case, after at most $k$ halvings, we reach some pattern for which the value of index 1 is 1. Assume that we halve $T_1$ repeatedly into patterns $T_2,T_3,...,T_n$ (where $T_{i}$ is the half pattern of $T_{i-1}$) such that $T_n$ is the first time that we reach a pattern for which index 1 is 1, i.e. $\forall 1\leq i \leq n-1 \;\; T_i[1]=0$ and $T_n[1]=1$. Notice that, additionally, by definition of a half-pattern for each $i$ it holds that $T_i[0]=1$ (since $T_1[0]=1$). If $T_n$ is non-monotone, then by Theorem 5 in \cite{Gilboa_Nisan_public_goods} we have that NTPNE($T_n$) is NP-complete under Turing reduction, and from Lemma \ref{lem_double_pat} (used inductively), we have that $\forall 1\leq i\leq n-1$ NTPNE($T_i$) is also NP-complete under Turing reduction, and specifically NTPNE($T_1$).
 Otherwise (i.e. $T_n$ is monotone), denote by $l$ the largest index s.t. $T_n[l]=1$, and observe $T_{n-1}$. By definition of double-patterns, we have that:
 \begin{align*}
 \forall j\in {\rm I\!N} \;\; T_{n-1}[2j]=
    \begin{cases}
        1 & \text{if $j\leq l$}\\
        0 & \text{otherwise}
    \end{cases}
\end{align*}
 
 i.e. the value in the even indices up to $2l$ is 1, and afterwards 0. Since $T_n$ is defined to be the first halving of $T_1$ s.t. its value in index 1 is 1, we have that $T_{n-1}[1]=0$. However, since the definition of a double-pattern does not restrict its values in odd indices, there might be odd indices (strictly larger than 1) for which the value of $T_{n-1}$ is 1. Divide into 3 sub-cases:

 \textit{Sub-case 1:} If there exists some $z\leq l$ s.t. $T_{n-1}[2z+1]=1$, then by Lemma \ref{lem_alternating_with_odd}, we have that NTPNE($T_{n-1}$) is NP-complete under Turing reduction.
 
 \textit{Sub-case 2:} Otherwise, if there exists some $z>l$ s.t. $T_{n-1}[2z+1]=1$, then observe the pattern $T_{n-1}'$, which we define as the pattern shifted left by $2l$ from $T_{n-1}$ i.e.:
 \[\forall j\in {\rm I\!N} \;\; T_{n-1}'[j]=T_{n-1}[j+2l]\]
 Notice that this pattern satisfies the conditions of Theorem \ref{thrm_semi_sharp_isolated_odd}, and therefore NTPNE($T_{n-1}'$) is NP-complete under Turing reduction. Then, by applying Theorem 7 from \cite{Gilboa_Nisan_public_goods} $l$ times, we have that NTPNE($T_{n-1}$) is also NP-complete under Turing reduction.
 
 \textit{Sub-case 3:} Otherwise (i.e. there is no odd index whatsoever in which the value of $T_{n-1}$ is 1), then by Corollary \ref{cor_alternating} we have that NTPNE($T_{n-1}$) is NP-complete under Turing reduction.
 
 And so, in either case we have that NTPNE($T_{n-1}$) is NP-complete under Turing reduction, and therefore from Lemma \ref{lem_double_pat} (used inductively), we have that $\forall 1\leq i\leq n-1$ NTPNE($T_i$) is also NP-complete under Turing reduction, and specifically NTPNE($T_1$).
 \end{proof}

\section{Hardness of All Spiked Patterns}
\label{sec_all_spiked}
There are several finite, spiked patterns that we have not yet proved hardness for, and we now have enough tools to close the remaining gaps. We remind the reader that spiked patterns are patterns that begin with 1,0,1. The following theorem formalizes the result of this section, and completes the characterization of all finite patterns.

\begin{theorem}
\label{thrm_all_spiked}
Let $T$ be a finite, spiked BRP. Then NTPNE($T$) is NP-complete under Turing reduction.
\end{theorem}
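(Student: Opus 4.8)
The plan is to reduce every finite spiked pattern to results already established, via the half/double-pattern machinery of Section~\ref{sec_all_semi_shrp}. The key observation is that if $T$ is spiked then $T[0]=T[2]=1$, so the half-pattern $S$ of $T$, defined by $S[k]=T[2k]$, has $S[0]=S[1]=1$ and is finite (since $T$ is). I would first handle the case where $S$ is non-monotone: then $S$ is a finite non-monotone pattern beginning with $1,1$, so NTPNE($S$) is NP-complete by Theorem~5 of \cite{Gilboa_Nisan_public_goods}, and since $T$ is a double-pattern of $S$, Lemma~\ref{lem_double_pat} gives that NTPNE($T$) is NP-complete.

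It then remains to treat the case where $S$ is monotone. A finite pattern with $S[0]=1$ cannot be monotone increasing, so $S$ is monotone decreasing: there is some $l\geq 1$ with $S[k]=1$ exactly for $k\leq l$. Unwinding the half-pattern definition, $T[2k]=1$ for $k\leq l$, $T[2k]=0$ for $k>l$, and $T[1]=0$, while the odd entries of $T$ beyond index~$1$ are unconstrained but eventually $0$. This is exactly the configuration analyzed at the end of the proof of Theorem~\ref{thrm_all_semi_shrp} (with $T$ playing the role of $T_{n-1}$ there), and I would reuse the same three-way split. (i)~If $T[2z+1]=1$ for some $1\leq z\leq l$, then $T$ meets the hypotheses of Lemma~\ref{lem_alternating_with_odd} with $m=2l$, so NTPNE($T$) is NP-complete. (ii)~Otherwise, if $T[2z+1]=1$ for some $z>l$, let $T'$ be the pattern obtained by shifting $T$ left by $2l$. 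Since $T[2l]=1$ and, by the failure of (i), $T[2l+1]=0$ and $T[2l+2]=0$, the pattern $T'$ is finite and semi-sharp, and it has an isolated $1$ at the odd index $2(z-l)+1$ (because $T'[2(z-l)]=T[2z]=0$, $T'[2(z-l)+1]=T[2z+1]=1$, and $T'[2(z-l)+2]=T[2z+2]=0$, using $z>l$); hence NTPNE($T'$) is NP-complete by Theorem~\ref{thrm_semi_sharp_isolated_odd}, and applying Theorem~7 of \cite{Gilboa_Nisan_public_goods} $l$ times (each step prepending $1,0$, which is legitimate since $T'$ begins with $1$) transfers hardness back to $T$. (iii)~Otherwise $T$ has no odd-index $1$ at all, so $T=[1,0,1,0,\ldots,1,0,0,0,\ldots]$ with ones exactly at indices $0,2,\ldots,2l$, and NTPNE($T$) is NP-complete by Corollary~\ref{cor_alternating} with $m=l$. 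In every case NTPNE($T$) is NP-complete under Turing reduction.

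I do not expect a real obstacle in this theorem: all of the substantive work — the Clause/Negation/Copy gadgets of Section~\ref{sec_0_or_2}, the Force-1/Add-1 gadgets of Section~\ref{sec_semi_shrp_isolated}, and the half/double-pattern transfer lemmas of Section~\ref{sec_all_semi_shrp} — is already in place, and this theorem is essentially a bookkeeping combination of them that mirrors the proof of Theorem~\ref{thrm_all_semi_shrp}. The only delicate points are checking that the case analysis is exhaustive (monotone versus non-monotone $S$, and then the three sub-cases on the odd entries of $T$) and that the hypotheses of each invoked statement hold verbatim — most notably that the left shift by $2l$ in sub-case~(ii) genuinely yields a semi-sharp pattern, which is precisely why the failure of sub-case~(i) is needed to force $T[2l+1]=0$. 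Combined with the semi-sharp case (Theorem~\ref{thrm_all_semi_shrp}) and the patterns beginning with $0$ or with a non-monotone $1,1$ prefix treated in \cite{Gilboa_Nisan_public_goods}, this completes the characterization for all finite patterns.
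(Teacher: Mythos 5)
Your proposal is correct: I checked each case and each invocation of a prior result, and the hypotheses are verified exactly where they need to be (in particular, the failure of sub-case~(i) does force $T[2l+1]=0$, so the shift by $2l$ in sub-case~(ii) is genuinely semi-sharp, and the pattern recovered by prepending $1,0$ to $T'$ exactly $l$ times is $T$ itself because all odd entries below $2l$ vanish). However, your route differs from the paper's in a meaningful way. The paper does not take the half-pattern of $T$ at all: after discharging the cases covered by Corollary~\ref{cor_alternating} and Lemma~\ref{lem_alternating_with_odd}, it locates the first even index $2k$ with $T[2k]=0$, observes that the prefix below $2k$ strictly alternates, shifts left by $2k-2$ to obtain a finite, non-monotone, \emph{semi-sharp} pattern, invokes Theorem~\ref{thrm_all_semi_shrp} as a black box, and shifts back with Theorem~7 of \cite{Gilboa_Nisan_public_goods}. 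Your Case~2 (monotone half-pattern $S$) essentially coincides with this, since there $2l=2k-2$ and you re-run by hand the three sub-cases that sit inside the proof of Theorem~\ref{thrm_all_semi_shrp}; but your Case~1 (non-monotone $S$) is a genuinely different argument: you bypass Theorem~\ref{thrm_all_semi_shrp} and the shifting machinery entirely, going instead through the doubling reduction of Lemma~\ref{lem_double_pat} applied to the finite non-monotone $1,1$-prefix pattern $S$, whose hardness comes from Theorem~5 of \cite{Gilboa_Nisan_public_goods}. What your version buys is the structural observation that spiked patterns fit the same halving paradigm as semi-sharp ones — a single halving already lands on a $1,1$ prefix because $T[2]=1$ — at the cost of duplicating the sub-case analysis that the paper encapsulates once in Theorem~\ref{thrm_all_semi_shrp}; the paper's version is shorter precisely because it reuses that theorem wholesale.
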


The intuitive idea of the proof is as follows. If the pattern simply alternates between 1 and 0 a finite amount of times (and at least twice, since the pattern is spiked), followed by infinite 0's, i.e. the pattern is of the form
\[T=[\underbrace{1,0,1,0,1,0,...,1,0}_{finite\;number\;of\;'1,0'},0,0,0,...]\]
then the problem\footnote{In fact, Corollary \ref{cor_alternating} gives a more general result, but we currently only need the special case where the pattern ends with infinite 0's.} is already shown to be hard by Corollary \ref{cor_alternating}. Otherwise, we wish to look at the first "disturbance" where this pattern stops alternating from 1 to 0 regularly. Either the first "disturbance" is a 1 at an odd index, i.e. the pattern is of the form
\[
T=[\underbrace{1,0,1,0,1,0,...,1,0}_{finite\;number\;of\;'1,0'},1,\textcolor{red}{1},?,?,...]
\]
or the first "disturbance" is a 0 at an even index, i.e. the pattern is of the form
\[
T=[\underbrace{1,0,1,0,1,0,...,1,0}_{finite\;number\;of\;'1,0'},\textcolor{red}{0},?,?,...,1,?,?,...]
\]
(in the latter option, after the first "disturbance" there must be some other index with value 1, since otherwise the pattern fits the form of Corollary \ref{cor_alternating}). The first option was solved in Lemma \ref{lem_alternating_with_odd}, and the second option can be solved using our previous results, as we shall now formalize in the proof. 

\begin{proof} (Theorem \ref{thrm_all_spiked})
If $T$ satisfies the conditions of Corollary \ref{cor_alternating} or Lemma \ref{lem_alternating_with_odd} then NTPNE($T$) is NP-complete under Turing reduction according to them. Otherwise, let $k$ be the smallest integer such that $T[2k]=0$. Denote by $T'$ the pattern which is shifted left by $2k-2$ from $T$, i.e.:
\[\forall j\geq 0\;\; T'[j]=T[j+2k-2]\]
Notice that from definition of $k$ (being the first even index such that $T[2k]=0$) we have that for all $j<k$ it holds that $T[2j]=1$. Moreover, since $T$ does not satisfy the conditions of Lemma \ref{lem_alternating_with_odd} it must hold for all $j\leq k$ that $T[2j-1]=0$, i.e. the value of $T$ in the odd indices until $2k$ is 0 (since otherwise $T$ would start with a finite number of 1,0, followed by two consecutive 1's, and would satisfy the conditions of Lemma \ref{lem_alternating_with_odd}). Thus, we have that

\begin{equation}
\label{eqn:01}
\forall j<2k \;\; T[j]=
    \begin{cases}
        1 \text{ if $j$ is even}\\
        0 \text{ if $j$ is odd}
    \end{cases}
\end{equation}

In particular, we have that $T[2k-2]=1,\; T[2k-1]=0$, which implies that $T'[0]=1,\;T'[1]=0$; as $T[2k]=0$ we have that $T'[2]=0$, and thus we conclude that $T'$ is semi-sharp. In addition, since $T$ does not satisfy the conditions of Corollary \ref{cor_alternating}, there must be some other index $x>2k$ such that $T[x]=1$, and therefore we have that $T'$ is non-monotone. Therefore, by Theorems \ref{thrm_semi_sharp_isolated_odd} and \ref{thrm_all_semi_shrp}, we have that NTPNE($T'$) is NP-complete under Turing reduction. We now wish to use this in order to prove that NTPNE($T$) is also hard.

From Equation \ref{eqn:01}, we can apply Theorem 7 of \cite{Gilboa_Nisan_public_goods} $(k-1)$ times (we remind the reader that this theorem states that prefixing a hard pattern with $1,0$ maintains its hardness), and we have that NTPNE($T$) is NP-complete under Turing reduction.
\end{proof}

\section*{Acknowledgements}
I would like to thank Noam Nisan for many useful conversations, and for suggesting the Copy Gadget in the proof of Theorem \ref{thrm_0_or_2}.
I would like to thank Roy Gilboa for many useful conversations, and for adjusting the Copy Gadget in the proof of Theorem \ref{thrm_0_or_2}.
I would like to thank Noam Nisan for communicating to me the alternative solution to the monotone case (see footnote \ref{Sigal_Oren_footnote}), which was suggested by Sigal Oren.
I would like to thank the anonymous ICALP reviewers for their helpful feedback.

\bibliographystyle{splncs04}
\bibliography{public_goods.bib}

\end{document}